\def\BibTeX{{\rm B\kern-.05em{\sc i\kern-.025em b}\kern-.08em
    T\kern-.1667em\lower.7ex\hbox{E}\kern-.125emX}}
\newcommand{\rom}[1]{\expandafter{\romannumeral #1\relax}}
\def\endfigure{\end@float}
\def\endtable{\end@float}
\newtheoremstyle{boldtheorem}
  {\topsep}   
  {\topsep}   
  {\normalfont}
  {}          
  {\bfseries} 
  {.}         
  {.5em}      
  {}          
\theoremstyle{boldtheorem}
\newlist{legal}{enumerate}{10}
\setlist[legal]{label*=\arabic*.}
\newtheorem{thm}{Theorem}
\newtheorem{rem}{Remark}%
\newtheorem{lem}{Lemma}%
\newtheorem{assum}{Assumption}%
\def\endthebibliography{%
  \def\@noitemerr{\@latex@warning{Empty `thebibliography' environment}}%
  \endlist
}
\begin{document}

\title{Enhancing the Reliability of Closed-Loop Describing Function Analysis for Reset Control
Applied to Precision Motion Systems}

\author{Xinxin Zhang, \IEEEmembership{Member, IEEE}, and S. Hassan HosseinNia, \IEEEmembership{Senior Member, IEEE} 
\thanks{Xinxin Zhang (e-mail: x.zhang-15@tudelft.nl) and S. Hassan HosseinNia  (the corresponding author, e-mail: S.H.HosseinNiaKani@tudelft.nl) are with the Department of Precision and Microsystems Engineering (PME), Delft University of Technology, Mekelweg 2, Delft, The Netherlands, 2628 CD. }}

\markboth{Journal of \LaTeX\ Class Files,~Vol.~14, No.~8, August~2021}%
{Shell \MakeLowercase{\textit{et al.}}: A Sample Article Using IEEEtran.cls for IEEE Journals}


\maketitle

\begin{abstract}
The Sinusoidal Input Describing Function (SIDF) is an effective tool for control system analysis and design, with its reliability directly impacting the performance of the designed control systems. This study enhances the reliability of SIDF analysis and the performance of closed-loop reset feedback control systems, presenting two main contributions. First, it introduces a method to identify frequency ranges where SIDF analysis becomes inaccurate. Second, these identified ranges correlate with high-magnitude, high-order harmonics that can degrade system performance. To address this, a shaped reset control strategy is proposed, which incorporates a shaping filter to tune reset actions and reduce high-order harmonics. Then, a frequency-domain design procedure of a PID shaping filter in a reset control system is outlined as a case study. The PID filter effectively reduces high-order harmonics and resolves limit-cycle issues under step inputs. Finally, simulations and experimental results on a precision motion stage validate the efficacy of the proposed shaped reset control, showing enhanced SIDF analysis accuracy, improved steady-state precision over linear and reset controllers, and elimination of limit cycles under step inputs.

\end{abstract}

\begin{IEEEkeywords}
Reset feedback control, Sinusoidal Input Describing Function (SIDF), high-order harmonics, precision positioning system, steady-state precision, limit cycles.
\end{IEEEkeywords}
\section{Introduction}
In mechatronics industries, such as semiconductor manufacturing, robotics, and optical systems, there is a continuous demand for enhanced positioning precision, speed, and stability \cite{schmidt2020design}. Linear feedback control, particularly Proportional-Integral-Derivative (PID) control, is widely used in these applications due to its effectiveness and ease of implementation. However, the limitations of linear controllers, including the ``waterbed effect" and the Bode phase-gain trade-offs \cite{chen2018beyond}, undermine their transient and steady-state performance, making it challenging for them to meet the increasing performance demands in industries.


To overcome the limitations of linear controllers and address the rising industrial demands, nonlinear control strategies have been explored \cite{wilamowski2018control}. One such advancement is reset feedback control. Reset control originates from the Clegg Integrator (CI), which is introduced by Clegg in 1958 \cite{clegg1958nonlinear}. The CI is a linear integrator that incorporates a reset mechanism \enquote{zero-crossing law,} that resets the integrator's output to zero whenever the input signal crosses zero. The Sinusoidal-Input Describing Function (SIDF) analysis \cite{guo2009frequency} reveals that the CI introduces a phase lead of 51.9°, while maintaining the gain characteristics of a linear integrator. Leveraging the gain and phase benefits, reset controllers demonstrate enhanced performance compared to linear controller, including reduced settling time, lower overshoot, and improved noise rejection in various precision motion control applications \cite{guo2009optimal, panni2014position, beerens2019reset, zhao2019overcoming, saikumar2019constant, barreiro2021reset, karbasizadeh2022continuous}. 

While reset control enhances gain-phase margins for the first-order harmonic, it also introduces high-order (beyond the first order) harmonics. To evaluate these harmonics in closed-loop reset systems, Higher-Order Sinusoidal Input Describing Function (HOSIDF) analysis is effectively used \cite{nuij2006higher, saikumar2021loop, ZHANG2024106063}. The HOSIDF analysis quantifies the magnitude and phase of the harmonics in reset systems by measuring the systems' steady-state responses to sinusoidal inputs over a frequency range \cite{nuij2006higher}. When only the first-order harmonic is considered and high-order harmonics are neglected, this is referred as the First-Order Sinusoidal Input Describing Function (FOSIDF) \cite{guo2009frequency}. In this study, both HOSIDF and FOSIDF are collectively termed SIDF analysis methods. 

Current SIDF analysis methods for closed-loop reset systems \cite{guo2009frequency, saikumar2021loop, ZHANG2024106063} assume that only two reset actions per steady-state cycle in sinusoidal-input reset systems. However, sinusoidal-input closed-loop reset systems can exhibit either two reset actions or multiple (more than two) reset actions per steady-state cycle, referred to as two-reset systems and multiple-reset systems, respectively. The two-reset assumption in SIDF analysis introduces inaccuracies when applied to multiple-reset systems, as demonstrated in Section \ref{sec: problem state}. In such cases, the validity of the SIDF analysis is compromised, and thus the reliability of the reset control system design based on this analysis is not guaranteed.

To enhance the reliability of SIDF analysis in closed-loop reset systems, the first contribution of this study presents a method for identifying multiple-reset frequency ranges where the validity of SIDF analysis is compromised. To achieve this, we first derive piecewise expressions for the steady-state trajectories of sinusoidal-input closed-loop reset control systems. Using these expressions, a method is proposed to evaluate whether the SIDF analysis satisfies the two-reset condition in closed-loop reset systems. In previous methods, verifying this condition required calculating time-domain responses across the entire frequency spectrum, a process that is computationally expensive. The new method streamlines this process, offering a more efficient approach. Experimental results from six case studies confirm the effectiveness and time-saving benefits of this method.


In addition to compromising the accuracy of closed-loop SIDF analysis, multiple-reset actions in sinusoidal-input closed-loop reset systems indicate high-magnitude high-order harmonics. These high-magnitude high-order harmonics can increase the system's sensitivity to high-frequency noise and disturbances, leading to overall performance degradation \cite{ZHANG2024106063}. 

To tackle this challenge, the second contribution of this study introduces a shaped reset control strategy that incorporates a shaping filter that enables the tuning of reset actions to reduce high-order harmonics while preserving the benefits of the first-order harmonic. A detailed design procedure for a PID-shaped reset control system is provided, aimed at reducing high-order harmonics in a CI-based reset system. Additionally, the PID-shaped reset control addresses limit-cycle issues in reset systems under step inputs. Experimental results on a precision motion stage show that by decreasing the impact of high-order harmonics, the PID-shaped reset control system enhances the reliability of SIDF analysis and improves steady-state precision, including better reference tracking accuracy, disturbance rejection, and noise suppression. Moreover, it eliminates limit-cycle problems.

The reminder of this paper is organized as follows: Section \ref{sec: preliminaries} provides background on reset control systems and and an overview of the experimental setup. Section \ref{sec: problem state} identifies two key research problems through illustrative examples, framing the study’s objectives. Section \ref{sec: main results1} introduces the first contribution: a method to distinguish between two-reset and multiple-reset actions in sinusoidal-input closed-loop reset systems, establishing two-reset conditions for SIDF analysis with validation through simulations and experiments. Section \ref{sec: New Shaped Reset Systems} presents the second contribution, proposing a shaped reset control strategy aimed at reducing high-order harmonics. Section \ref{sec: shaped_reset_control_design} outlines a design procedure for a PID shaping filter, showcased as a case study to decrease high-order harmonics and eliminate limit cycles. Section \ref{sec: Experiments Results} then provides simulation and experimental results to validate the PID-shaped reset control system’s effectiveness on a precision motion stage. Finally, Section \ref{sec: conclusion} summarizes the main findings and offers recommendations for future research directions.
\section{Preliminaries}
\label{sec: preliminaries}
This section introduces the definition of the reset feedback control system, its stability and convergence conditions, the SIDF analysis method, and the experimental setup.

\subsection{Reset Control System}

This study focuses on the frequency-domain analysis and design in closed-loop reset feedback control systems, whose block diagram is shown in Fig. \ref{fig: RCS_d_n_r_n_n}. 
\begin{figure}[htp]
	\centerline{\includegraphics[width=0.45\textwidth]{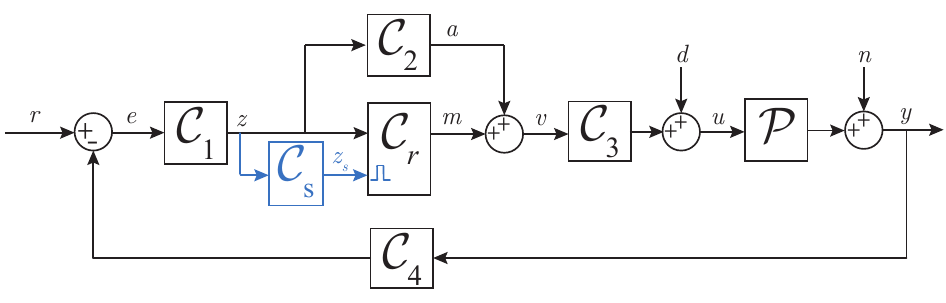}}
	\caption{Block diagram of the closed-loop reset feedback control system.}
	\label{fig: RCS_d_n_r_n_n}
\end{figure}

The system includes a nonlinear time-invariant reset controller \(\mathcal{C}_r\) with a Linear Time-Invariant (LTI) shaping filter \(\mathcal{C}_s\). Additionally, it incorporates LTI components \(\mathcal{C}_1\), \(\mathcal{C}_2\), \(\mathcal{C}_3\), \(\mathcal{C}_4\), and an LTI plant \(\mathcal{P}\). The signals \( r(t) \), \( e(t) \), \( u(t) \), and \( y(t) \) represent the reference input, error, control input, and output signals, respectively. The signals \( a(t) \) and \( v(t) \) correspond to the output of \(\mathcal{C}_2\) and the input to \(\mathcal{C}_3\), respectively.



The reset controller \(\mathcal{C}_r\) is a hybrid system. Its state-space representation, with \( z(t) \) as the input signal, \( m(t) \) as the output signal, and \( x_c(t) \in \mathbb{R}^{n_c \times 1} (n_c\in\mathbb{Z}^+ )\) as the state vector, is expressed as follows \cite{banos2012reset}:
\begin{equation} 
	\label{eq: State-Space eq of RC} 
	\mathcal{C}_r = \begin{cases}
        \dot{x}_c(t) = A_Rx_c(t) + B_Rz(t), &  t \notin J_c, \\
		x_c(t^+) = A_\rho x_c(t), & t \in J_c, \\
		m(t) = C_Rx_c(t) + D_Rz(t),\end{cases}
\end{equation} 
where \( A_R \in \mathbb{R}^{n_c \times n_c} \), \( B_R \in \mathbb{R}^{n_c \times 1} \), \( C_R \in \mathbb{R}^{1 \times n_c} \), and \( D_R \in \mathbb{R}^{1 \times 1} \) are matrices that characterize the flow dynamics of the controller. These dynamics are represented by the Base-Linear Controller (BLC), defined as:
\begin{equation}
\label{eq: RL}
C_{l}(\omega) =  C_R(j\omega I-A_R)^{-1}B_R+D_R, \ j=\sqrt{-1} , \omega\in\mathbb{R}^+.
\end{equation}
By substituting the reset controller \(\mathcal{C}_r\) with its base-linear counterpart \(\mathcal{C}_l\) as defined in \eqref{eq: RL}, the system depicted in Fig. \ref{fig: RCS_d_n_r_n_n} is referred to as the Base Linear System (BLS).

The reset controller \(\mathcal{C}_r\) employs the ``zero-crossing law" \cite{banos2012reset}, where reset actions are triggered when the ``reset-trigger signal" \(z_s(t)\) crosses zero. The set of reset instants \(t_i\) is defined as \( J_c = \{ t_i \mid z_s(t_i) = 0, \, i \in \mathbb{N} \} \). At reset instants \(t_i \in J_c\), the jump dynamics of the reset controller \(\mathcal{C}_r\) are governed by the reset matrix \(A_\rho \in \mathbb{R}^{n_c \times n_c}\), described as:
\begin{equation}
\label{eq: A_rho}
		A_\rho=\begin{bmatrix}
			\gamma & \\
			& I_{n_c-1}
		\end{bmatrix}, \ \gamma\in(-1,1)\in\mathbb{R}.
\end{equation}

\subsection{Stability and Convergence Conditions for Reset Control Systems}
Although stability and convergence are not the primary focus of this study, they are essential for the analysis and application of reset systems. These topics have been extensively explored in the literature \cite{banos2012reset}, and we outline the necessary assumptions to ensure stability and convergence.

The reset controller \eqref{eq: State-Space eq of RC} with an input signal \( e(t) = |E|\sin(\omega t + \angle E) \) has a globally asymptotically stable \( 2\pi/\omega \)-periodic solution if the following condition is satisfied \cite{guo2009frequency}:
\begin{equation}
\label{eq:open-loop stability}
    |\lambda (A_\rho e^{A_R\delta})|<1,\ \forall \delta \in \mathbb{R}^+.
\end{equation} 
Therefore, to guarantee there exists a steady-state solution for the SIDF analysis of the open-loop reset control system, the following assumption is introduced:
\begin{assum}
\label{assum: open-loop stable}
The reset system \eqref{eq: State-Space eq of RC} with input $e(t) = |E|\sin(\omega t + \angle E)$ meets the condition in \eqref{eq:open-loop stability}. Additionally, LTI systems \(\mathcal{C}_1\), \(\mathcal{C}_2\), \(\mathcal{C}_3\), and \(\mathcal{C}_4\) are Hurwitz.
\end{assum}

For a closed-loop reset control system, the conditions outlined in Assumption \ref{assum: stable} \cite{dastjerdi2022closed} ensure the system is uniformly exponentially convergent.
\begin{assum}
\label{assum: stable}
The initial condition of the reset controller \(\mathcal{C}_r\) is zero; There are infinitely many reset instants \(t_i\) such that \(\lim\nolimits_{i\to \infty} t_i = \infty\); The input signal is a Bohl function as defined in \cite{barabanov2001bohl}; The system does not exhibit Zenoness behavior; The \(H_\beta\) condition is satisfied, as per \cite{beker2004fundamental}.
\end{assum}
Assumption \ref{assum: stable} can be practically achieved through meticulous system design \cite{banos2012reset, saikumar2021loop}. 

\subsection{SIDF Analysis for Open-Loop Reset Control Systems}
For an open-loop reset system in Fig. \ref{fig: RCS_d_n_r_n_n} with the input \( e(t) = |E|\sin(\omega t + \angle E) \) and the output \( y(t) \), satisfying Assumption \ref{assum: open-loop stable}, let \( E(\omega) \) and \( Y_1(\omega) \) represent the Fourier transforms of the input signal \( e(t) \) and the first-order harmonic component of the output signal \( y(t) \), respectively. Using the SIDF analysis method \cite{karbasizadeh2022band}, the first-order transfer function of the open-loop reset system, \( L_1(\omega) \), is expressed as:
\begin{equation}
\label{eq: Ln}
\begin{aligned}
  L_1(\omega) &= \frac{Y_1(\omega)}{E(\omega)}= \mathcal{C}_1(\omega)(\mathcal{C}_r^1(\omega) + \mathcal{C}_2(\omega)) \mathcal{C}_3(\omega) \mathcal{P}(\omega),
\end{aligned}
\end{equation}
where
\begin{equation}
 \label{eq: Hn} 
    \begin{aligned}
   \Theta_\phi(\omega) &= {-2j\omega I e^{j\angle \mathcal{C}_s(\omega)}}\Omega(\omega)/{\pi}\cdot[\omega I \cos(\angle \mathcal{C}_s(\omega))\\
    &\ \ \ \ \ \ \ \  -A_R\sin(\angle \mathcal{C}_s(\omega))](\omega ^2I+{A_R}^2)^{-1} B_R,\\
   \mathcal{C}_r^1(\omega) &= C_R(A_R-j\omega I)^{-1}\Theta_\phi(\omega) +\mathcal{C}_{bl}(\omega),\\
    \Omega(\omega) &= \Delta(\omega) - \Delta(\omega){\Delta_r}^{-1}(\omega)A_\rho\Delta(\omega),\\
   \Delta _r(\omega) &= I+A_{\rho}e^{(\frac{\pi}{\omega}A_R)},\\
   \Delta(\omega) &= I+e^{(\frac{\pi}{\omega}A_R)}.
    \end{aligned}
\end{equation}
In this study, the crossover frequency \(\omega_{BW}\) of $L_1(\omega)$ where \(|L_1(\omega_{BW})| = 0\) dB in \eqref{eq: Ln}, is defined as the bandwidth frequency of a reset control system.
\subsection{Precision Motion Stage}
This paper addresses the challenge of performance in reset feedback control systems related to SIDF analysis, which is crucial for precision motion control applications. When the reliability of the SIDF frequency response analysis for closed-loop reset systems is compromised, it results in uncertainty regarding the precision of the designed reset control system. Additionally, the frequency ranges where SIDF analysis is compromised correspond to regions with high-order harmonics. If these harmonics are not properly managed in reset systems, they can cause oscillations due to high-frequency noise. Such oscillations can degrade system precision, negatively affecting stability and overall performance.

The precision motion setup used in this study is shown in Fig. \ref{fig: spider}. The motion stage is a 3 Degree-of-Freedom (DoF) system with three masses (\(M_1\), \(M_2\), and \(M_3\)), actuated by three voice coil actuators (\(A_1\), \(A_2\), and \(A_3\)). These masses are connected to a central base (\(M_c\)) via dual leaf flexures, which provide the necessary flexibility for precise motion. The actuators are driven by a linear current-source power amplifier. The control systems for this stage are implemented on an NI compactRIO platform, which includes FPGA modules for real-time processing. The digital control utilizes the \enquote{Tustin} discretization method. Position feedback is provided by a Mercury M2000 linear encoder (labeled as \enquote{Enc}, offering a high-resolution measurement of 100 nm, with the data sampled at a rate of 10 kHz.
\vspace{-0.3cm}
\begin{figure}[htp]
	\centerline{\includegraphics[width=0.4\textwidth]{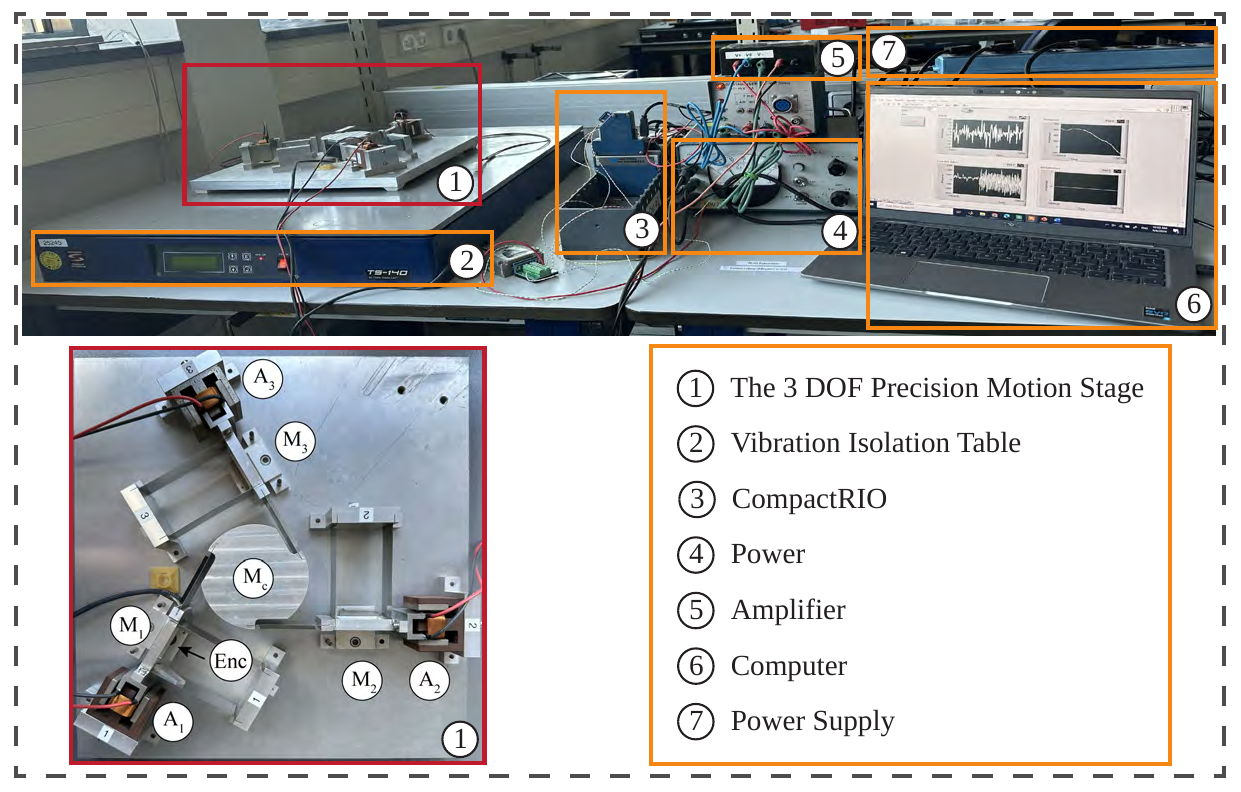}}
	\caption{Experimental precision positioning setup.}
	\label{fig: spider}
\end{figure}

\vspace{-0.3cm}
In this study, the pair of actuator \(A_1\) and mass \(M_1\) are utilized. Figure \ref{fig: spide_frf} shows the measured Frequency Response Function (FRF) from actuator \(A_1\) to mass \(M_1\). The FRF data characterize a collocated double mass-spring-damper system with high-frequency parasitic dynamics. 
\vspace{-0.3cm}
\begin{figure}[h]
	\centerline{\includegraphics[width=0.40\textwidth]{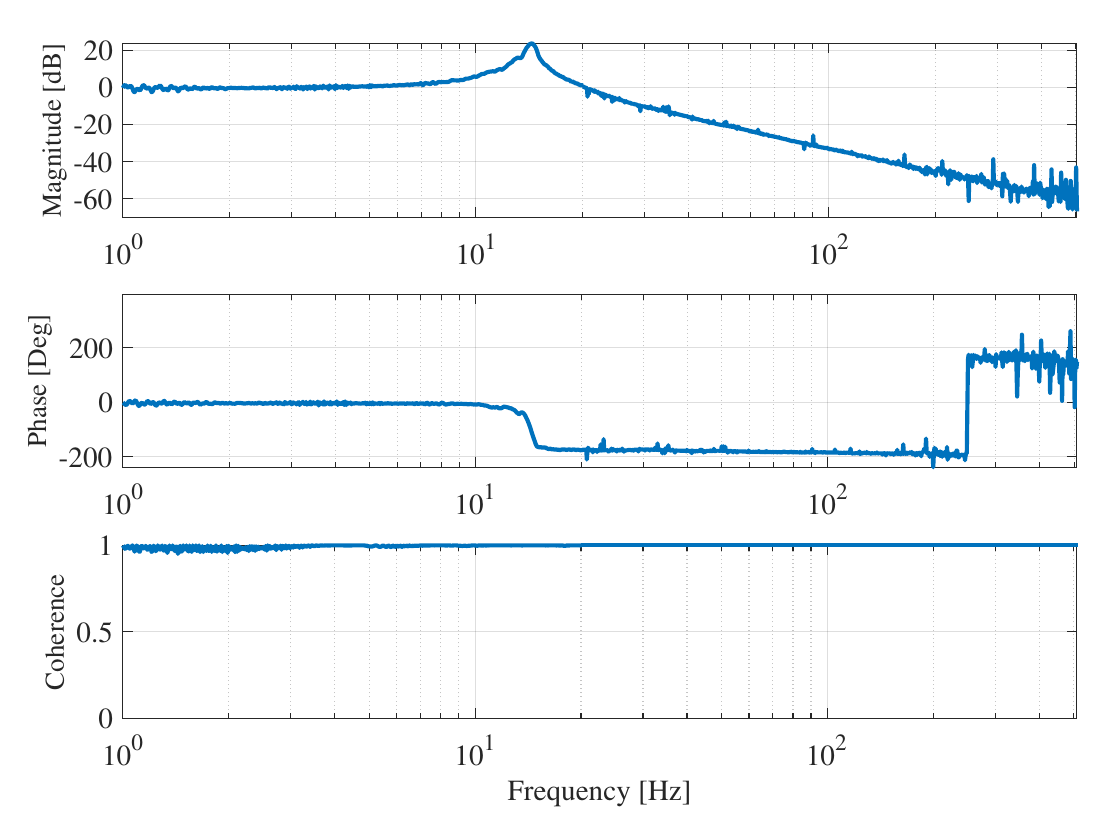}}
	\caption{FRF data from actuator $A_1$ to attached mass $M_1$.}
	\label{fig: spide_frf}
\end{figure}

\vspace{-0.3cm}
Using the system identification tools in MATLAB, the system is modeled as an LTI system represented by 
\begin{equation}
\label{eq:P(s)}
\mathcal{P}(s) = \frac{6.615 \times 10^5}{83.57s^2 + 279.4s + 5.837 \times 10^5}.
\end{equation}
This model represents the core behavior of the actuator-mass system and is used for the design and analysis of the reset control strategies discussed in the paper.
\section{Motivation and Problems Statement via Illustrative Examples} 
\label{sec: problem state}
This section outlines two research problems through examples. The first problem is that the precision of SIDF analysis is compromised by multiple-reset actions. The second problem is that these actions relate to high-order harmonics in the system, which, if large, can degrade overall system performance.

To illustrate these two problems, we design a reset control system as an example. The generalized CI is defined by \eqref{eq: State-Space eq of RC} with the matrices \( A_\rho = \gamma \in (-1, 1) \) and \((A_R, B_R, C_R, D_R) = (0, 1, 1, 0)\). A CI-based reset controller is a reset element designed using built on this generalized CI. PID controllers are widely used in mechatronics applications, and when the integrator in the PID controller is replaced by the generalized CI, the system becomes a reset PID control system. This section uses a reset PID control system to demonstrate the research problems addressed in this study.

The block diagram of the reset PID control system used in this study is depicted in Fig. \ref{Structure_PCIID}. The parameter \(\zeta\) denotes the number of integrators in the system, with this study utilizes cases where \(\zeta = 0\) and \(\zeta = 1\), referred to as Proportional-Clegg Integrator-Derivative (PCID) and PCI-PID control systems, respectively. More discussion on employing multiple integrators (\(\zeta > 1\)) is beyond the scope of this paper and can be found in \cite{karbasizadeh2022stacking}.
\vspace{-0.25cm}
\begin{figure}[htp]
    \centering
    \centerline{\includegraphics[width=0.45\textwidth]{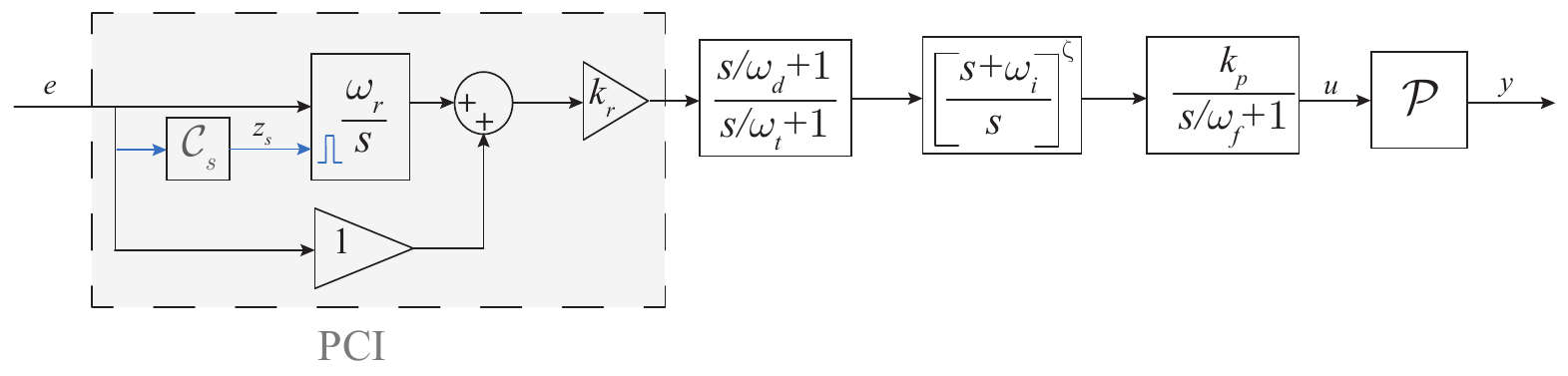}}
    \caption{Block diagram of the reset PID control system.}
	\label{Structure_PCIID}
\end{figure}

\vspace{-0.25cm}
A PCID control system is designed as the illustrative example with the following parameters: \(k_p = 17.8\), \(\omega_c = 300\pi\) [rad/s], \(\omega_r = 0.1 \omega_c\), \(k_r = 0.85\), \(\gamma = 0\), \(\omega_d = \omega_c / 3.8\), \(\omega_t = 3.8 \omega_c\), \(\omega_f = 10 \omega_c\), \(\zeta = 0\), and \(\mathcal{C}_s = 1\). A PID controller is also designed for comparison with the following parameters: \(k_p = 17.8\), \(\omega_c = 300\pi\) [rad/s], \(\omega_i = 0.084 \omega_c\), \(\omega_d = \omega_c / 3.8\), \(\omega_t = 3.8 \omega_c\), and \(\omega_f = 10 \omega_c\).

The Bode plots for the PID and the first-order harmonic of the PCID control systems are presented in Fig. \ref{PCID_Ln_final}. To ensure a fair comparison, both the PID and PCID controllers are designed to maintain the same bandwidth of 100 Hz and a phase margin of 50° with the plant \(\mathcal{P}(s)\) in \eqref{eq:P(s)}. However, the PCID controller exhibits a higher gain at frequencies below 100 Hz and a reduced gain at frequencies above 100 Hz. This design aims to enhance low-frequency tracking and disturbance rejection, and high-frequency noise suppression.
\vspace{-0.25cm}
\begin{figure}[htp]
    \centering
    \centerline{\includegraphics[width=0.38\textwidth]{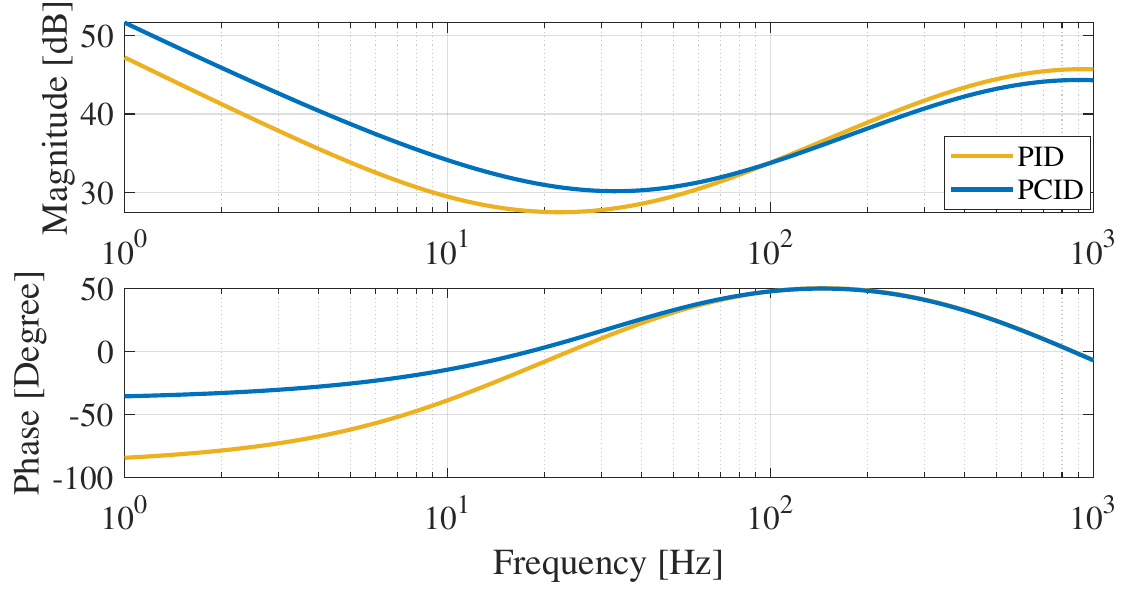}}
    \caption{Bode plots of the PID and the first-order harmonic of the PCID control systems.}
	\label{PCID_Ln_final}
\end{figure}

\vspace{-0.55cm}
\subsection{Problem 1: Multiple-Reset Actions Leading to Inaccuracies in Closed-Loop SIDF Analysis}


To evaluate the performance of closed-loop reset control systems, SIDF analysis is often employed. For closed-loop reset systems with a sinusoidal input \( r(t) = |R| \sin(\omega t) \), which satisfies Assumption \ref{assum: stable}, the sensitivity function based on SIDF analysis \cite{guo2009frequency, karbasizadeh2022band}, is defined as follows:
\begin{equation}
\label{eq: df_s1}
    \mathcal{S}(\omega) = {1}/({1 + L_1(\omega)}),
\end{equation}
where \( L_1(\omega) \) is defined in \eqref{eq: Ln}. 
\vspace{-0.3cm}
\begin{figure}[htp]
    \centering
    \centerline{\includegraphics[width=0.4\textwidth]{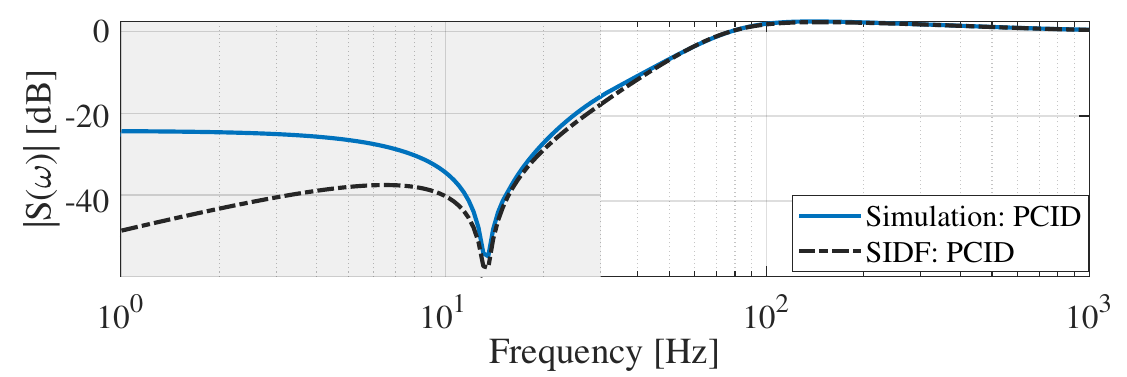}}
    \caption{The value of \( |\mathcal{S} (\omega)| \) in the PCID control system, obtained from simulation and the SIDF analysis. Multiple-reset and two-reset systems are shaded in gray and white, respectively.}
	\label{e_infty_sim_df_pcid}
\end{figure}

\vspace{-0.25cm}
The magnitude of the closed-loop sensitivity function \( |\mathcal{S}(\omega)| \) for the PCID control system, analyzed using \eqref{eq: df_s1}, is presented in Fig. \ref{e_infty_sim_df_pcid}. This analytical result is compared to the simulated value of \( |\mathcal{S}(\omega)| \), which is calculated as \( {||e||_\infty}/{||r||_\infty} \) at each frequency \( \omega \), where \( e(t) \) represents the steady-state error and \( r(t) \) denotes the input signal.

In closed-loop reset systems with a sinusoidal input \( r(t) = |R| \sin(\omega t) \), a two-reset system is defined by exactly two reset events within each \( 2\pi/\omega \) steady-state cycle, whereas a multiple-reset system has more than two reset events per cycle. In Fig. \ref{e_infty_sim_df_pcid}, the region associated with multiple-reset systems is shaded in gray, where notable discrepancies between SIDF analysis and simulation results are observed. These differences arise because the two-reset assumption in the SIDF analysis, does not hold in systems exhibiting multiple-reset actions.

Hence, to ensure the reliability of the SIDF analysis for closed-loop reset systems, it is crucial to establish a two-reset condition. The first contribution of this study in Section \ref{sec: main results1} addresses this issue. Consider a closed-loop reset system subjected to a sinusoidal input signal defined by \( r(t) = |R| \sin(\omega t) \), where \( |R| \) represents the amplitude and \( \omega \) denotes the frequency. As \( \omega \) sweeps through the operational frequency range, the proposed method identifies the frequency ranges where multiple-reset actions occur and thus deviations occur in the SIDF analysis, as illustrated by the gray area in Fig. \ref{e_infty_sim_df_pcid}.



\vspace{-0.3cm}
\subsection{Problem 2: Multiple-Reset Actions Relating to High-Order Harmonics that Need to be Reduced}
In addition to introducing imprecision in SIDF analysis, multiple-reset actions in a sinusoidal-input closed-loop reset system indicate high-magnitude high-order (beyond first-order) harmonics \cite{saikumar2021loop, ZHANG2024106063}. These large high-order harmonics can adversely affect overall system performance. This detrimental effect is further illustrated in Fig. \ref{fig: mrcs_example_pcid}.

The steady-state errors of the closed-loop PCID and PID control systems, subjected to sinusoidal reference inputs \( r(t) = \sin(20\pi t) \) (10 Hz) and \( r(t) = \sin(100\pi t) \) (50 Hz), are illustrated in Fig. \ref{fig: mrcs_example_pcid}. Additionally, the corresponding Power Spectral Density (PSD) plots are presented. To facilitate a clearer comparison, the magnitude of the first-order harmonic of the steady-state error in the PCID control system is normalized to 1, with the same scaling factor applied to the PID control system for fair comparison.


At an input frequency of 50 Hz, as shown in Fig. \ref{fig: mrcs_example_pcid}(b$_2$), the first-order harmonic component dominates, and the magnitudes of the high-order harmonics are relatively small. In this scenario, as illustrated in Fig. \ref{fig: mrcs_example_pcid}(a$_2$), the PCID control system demonstrates two-reset actions and a lower steady-state error compared to the PID control system. 

In contrast, at an input frequency of 10 Hz, the error signal exhibits multiple reset instants as shown in Fig. \ref{fig: mrcs_example_pcid}(a$_1$), which are associated with the presence of high-magnitude, high-order harmonics in Fig. \ref{fig: mrcs_example_pcid}(b$_1$). These high-magnitude high-order harmonics diminish the benefits of the first-order harmonic in the PCID control system, leading to a larger steady-state error compared to the linear PID control system in Fig. \ref{fig: mrcs_example_pcid}(a$_1$).
\vspace{-0.25cm}
\begin{figure}[htp]
      \centerline{\includegraphics[width=0.49\textwidth]{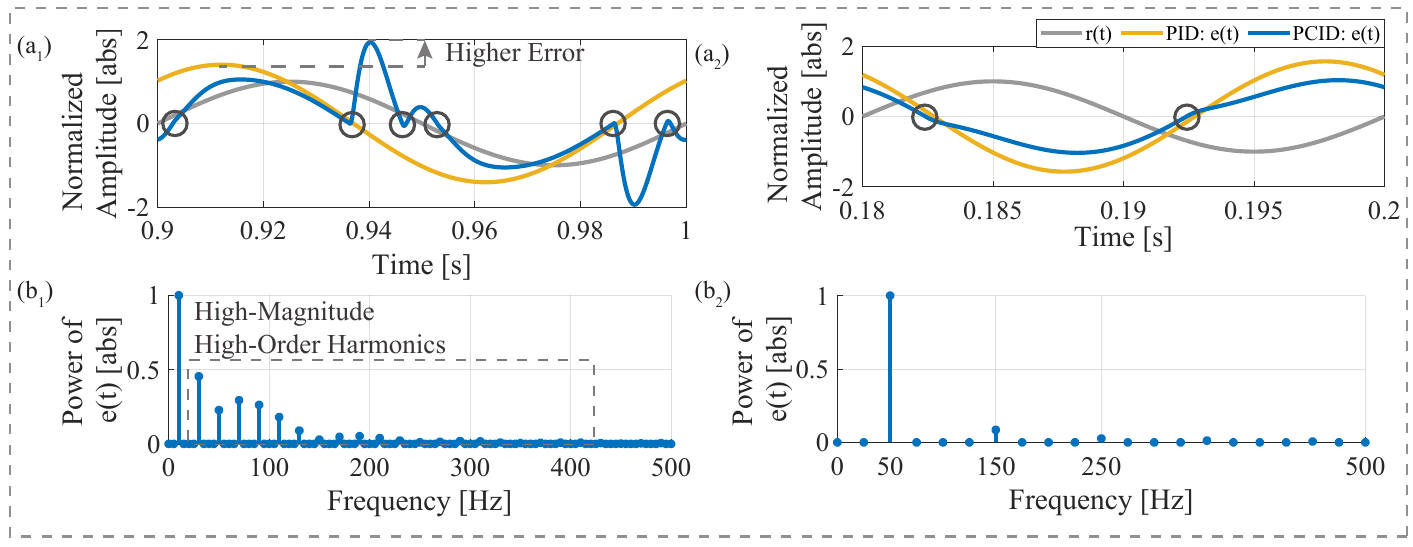}}
\caption{Steady-state errors \(e(t)\) for the PID and PCID systems under two input signals: (a$_1$) \(r(t) = \sin(20\pi t)\) and (a$_2$) \(r(t) = \sin(100\pi t)\). The gray circles mark the reset instants per cycle. Panels (b$_1$) and (b$_2$) display the PSD plots for the errors \(e(t)\) in (a$_1$) and (a$_2$), respectively.}
\label{fig: mrcs_example_pcid}
\end{figure}

\vspace{-0.25cm}
Thus, this work introduces a shaped reset control strategy to address the adverse effects of high-magnitude high-order harmonics, as detailed in Section \ref{sec: New Shaped Reset Systems}. 

Note that though the practical applications extend beyond sinusoidal-input systems, the sinusoidal-input analysis serves as an effective tool for investigating the frequency-domain harmonic characteristics within these reset control systems.

\section{Identifying Two-Reset Conditions in SIDF Analysis and Experimental Validation}
\label{sec: main results1}
In this section, first, Lemma \ref{lem: piece-wise} presents the piecewise expressions of steady-state trajectories in sinusoidal-input closed-loop reset systems. Then, building on these expressions, Theorem \ref{thm: Delta} introduces a method to identify frequency ranges of multiple-reset and two-reset actions in sinusoidal-input closed-loop reset systems. Finally, simulations and experiments validate the effectiveness of the approach in Theorem \ref{thm: Delta}.

\vspace{-0.5cm}
\subsection{Piecewise Expressions for Steady-State Trajectories in Sinusoidal-Input Closed-Loop Reset Control Systems}
\label{subsec: tool}

Consider a closed-loop reset system with a sinusoidal input \( r(t) = |R| \sin(\omega t) \) that satisfies Assumption \ref{assum: stable}. In order to conduct steady-state analysis, it is crucial to establish a reference point for one steady-state cycle. This reference point \( t_0 = 0 \) is defined at the time instant where \( r(t_0) = 0 \) and \( \dot{r}(t_0) > 0 \). 

Lemma \ref{lem: piece-wise} provides a piecewise expression of steady-state trajectories in sinusoidal-input closed-loop reset systems.
\begin{lem}
\label{lem: piece-wise}
Consider a closed-loop reset control system as shown in Fig. \ref{fig: RCS_d_n_r_n_n}, with a sinusoidal reference input \( r(t) = |R| \sin (\omega t) \), and satisfying Assumptions \ref{assum: stable}. Within one steady-state period \((0, 2\pi/\omega]\), the reset instant \( t_i \), at which \( z_s(t_i) = 0 \), divides the system trajectories into piecewise functions. Let \(x_i(t)\), \(z_i(t)\), and \(z_s^i(t)\) denote the state, reset input, and reset-triggered signal, within the intervals \((t_{i-1}, t_i]\), where \( i \in \mathbb{Z}^+ \), respectively. They are expressed as follows: 
\begin{equation}
\label{eq: xi, zi, zsi}
\begin{aligned} 
x_{i+1}(t) &= x_{i}(t) - h_{s}(t-t_i)x_i(t_i),\\       
z_{i+1}(t) &= z_{i}(t) - h_{\alpha}(t-t_i)x_i(t_i),\\
z_s^{i+1}(t) &= z_s^i(t) - h_{\beta}(t-t_1)x_i(t_i), 
\end{aligned}
\end{equation}
where
\begin{equation}
\label{eq: piece_2}
\begin{aligned}
h_{s}(t) &= \mathscr{F}^{-1}[\mathcal{T}_{s}(\omega)],\\    
h_{\alpha}(t) &= \mathscr{F}^{-1}[\mathcal{T}_{\alpha}(\omega)],\\
\mathcal{T}_{\alpha}(\omega) &=\mathcal{C}_{\sigma}(\omega)C_R\mathcal{T}_{s}(\omega),\\   
\mathcal{S}_{bl}(\omega) &= {1}/({1+\mathcal{L}_{bl}(\omega)}),\\
h_{\beta}(t) &= \mathscr{F}^{-1}[\mathcal{C}_s(\omega)\mathcal{T}_{\alpha}(\omega)],\\
\mathcal{C}_\sigma(\omega) &=\mathcal{C}_{3}(\omega)\mathcal{P}(\omega)\mathcal{C}_{4}(\omega)\mathcal{C}_{1}(\omega),\\
\mathcal{T}_{s}(\omega) &=\mathcal{S}_{bl}(\omega)(j\omega I-A_R)^{-1} (A_\rho-I),\\
\mathcal{L}_{bl}(\omega)&=\mathcal{C}_{1}(\omega)(\mathcal{C}_l(\omega)+\mathcal{C}_2(\omega))\mathcal{C}_{3}(\omega)\mathcal{P}(\omega)\mathcal{C}_{4}(\omega).
    \end{aligned}
\end{equation}
\end{lem}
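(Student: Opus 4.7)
The plan is to interpret each reset action as an impulsive disturbance injected into the state equation of the reset controller, and then to use linear superposition together with the closed-loop sensitivity of the base-linear system (BLS) to propagate the effect of that impulse through the loop. The first observation is that on every open interval $(t_{i-1}, t_i)$ no reset fires, so the closed-loop behaves exactly like the BLS driven by $r(t)=|R|\sin(\omega t)$; consequently, the trajectory $x_i(t)$ defined on $(t_{i-1}, t_i]$ admits a unique LTI continuation past $t_i$, and the formulas in \eqref{eq: xi, zi, zsi} should be read as evaluating $x_i(t)$, $z_i(t)$, and $z_s^i(t)$ on $(t_i, t_{i+1}]$ in this extended (unreset) sense. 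The next step is to write the jump $x_c(t_i^+) - x_c(t_i^-) = (A_\rho - I)x_c(t_i)$ as a state impulse added to the base-linear dynamics, $\dot x_c = A_R x_c + B_R z + (A_\rho - I)x_c(t_i)\,\delta(t-t_i)$, so that integration across $t_i$ recovers the jump. Because every block away from the reset instants is LTI, superposition applies and the true trajectory on $(t_i, t_{i+1}]$ equals the BLS continuation plus the closed-loop response to this impulsive injection.

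The second step is to compute the closed-loop impulse responses from an impulse injected into $\dot x_c$ to each of the three signals of interest. In the frequency domain, the open-loop transfer from the impulse to the controller state is $(j\omega I - A_R)^{-1}(A_\rho - I)$; multiplying by the base-linear sensitivity $\mathcal{S}_{bl}(\omega)$ accounts for the feedback through $\mathcal{C}_\sigma$, which is exactly $\mathcal{T}_s(\omega)$. Composing further with $C_R$ and the forward loop $\mathcal{C}_\sigma(\omega)$ propagates the state impulse into the reset input $z$, giving $\mathcal{T}_\alpha(\omega)$; a final multiplication by $\mathcal{C}_s(\omega)$ yields the response at the reset-trigger signal. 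Taking inverse Fourier transforms gives $h_s$, $h_\alpha$, and $h_\beta$, and the minus sign in \eqref{eq: xi, zi, zsi} encodes that these kernels describe the correction that must be subtracted from the unreset continuation in order to realize the actual jump with the correct polarity.

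Finally, by induction on $i$ the piecewise formulas follow: starting from $x_1$ on $(t_0, t_1]$, applying the above correction at $t_1$ yields $x_2$ on $(t_1, t_2]$, and iterating gives the stated recursion for all $i\in\mathbb{Z}^+$. The main obstacle is the careful bookkeeping at the second step. One has to verify (i) that the matrix structure is consistent, since $\mathcal{S}_{bl}$ is a scalar sensitivity while $(j\omega I - A_R)^{-1}(A_\rho - I)$ is an $n_c \times n_c$ operator acting on the controller state; (ii) that the LTI branch $\mathcal{C}_2$ in parallel with $\mathcal{C}_r$ is already absorbed into $\mathcal{L}_{bl}$ so that the feedback loop is not double-counted; and (iii) that the same $\mathcal{S}_{bl}$ remains the correct feedback operator for a state-level disturbance, not merely for an input-level one. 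The remaining aspects — periodicity of the steady state, existence and uniqueness of the trajectory pieces, and the legitimacy of applying superposition across countably many reset instants — follow from Assumption \ref{assum: stable}, which guarantees uniform exponential convergence to a unique Bohl steady state.
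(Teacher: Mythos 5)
Your proposal is correct and follows essentially the same route as the paper's proof: both model the reset jump at $t_i$ as an exogenous disturbance injected at the controller state, propagate it through the closed loop via the base-linear sensitivity $\mathcal{S}_{bl}(\omega)$ to obtain $\mathcal{T}_s(\omega)$, then $\mathcal{T}_\alpha(\omega)=\mathcal{C}_\sigma(\omega) C_R\mathcal{T}_s(\omega)$ and $\mathcal{C}_s(\omega)\mathcal{T}_\alpha(\omega)$, and iterate the correction over successive reset instants by superposition. The only cosmetic difference is that you inject an impulse $(A_\rho-I)x_c(t_i)\,\delta(t-t_i)$ into $\dot{x}_c$, whereas the paper injects the equivalent step $(A_\rho-I)x_i(t_i)\,h(t-t_i)$ at the state output; these yield identical frequency-domain corrections because $(j\omega I-A_R)^{-1}e^{-j\omega t_i}=(j\omega I-A_R)^{-1}\,j\omega\cdot(j\omega)^{-1}e^{-j\omega t_i}$.
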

\begin{proof}
    The proof is provided in \ref{appendix: proof for Lemma 1}.
\end{proof}

\subsection{Identifying Multiple-Reset and Two-Reset Actions in Sinusoidal-Input Closed-Loop Reset Control Systems}
Consider a closed-loop reset system with a sinusoidal input \( r(t) = |R| \sin(\omega t) \) that satisfies Assumption \ref{assum: stable}. Let \( t_1 \) denote the first reset instant within a single steady-state cycle. According to \eqref{eq: State-Space eq of RC}, during the interval \((0, t_1)\), the system does not undergo any reset actions, and its dynamics are determined by its BLS. However, before reaching steady-state responses, the system experiences transient trajectories. Although transient responses do not influence the steady-state trajectories in the BLS, the reset actions in the reset systems that occur during these transient states result in different initial conditions, leading to distinct trajectories compared to the BLS during the steady-state interval \((0, t_1)\). In practical reset system designs, these transient effects are often addressed through the feed-forward control and high-bandwidth feedback loops. To streamline the analysis and obviate the need for transient response calculations, the following assumption is introduced.
\begin{assum}
\label{assum: t1}
The closed-loop reset control system depicted in Fig. \ref{fig: RCS_d_n_r_n_n}, under the sinusoidal reference input \(r(t) = |R|\sin(\omega t)\) and satisfying Assumption \ref{assum: stable}, follows the same steady-state trajectory as its Base Linear System (BLS) during the time interval \((0, t_1)\), where \(t_1\) represents the first reset instant of this system within one steady-state cycle.
\end{assum}

While Assumption \ref{assum: t1} may introduce deviations in the steady-state analysis, these deviations will be evaluated through case studies in Section \ref{result 1}. 

Then, Theorem \ref{thm: Delta} and Remark \ref{rem: 2rcs} delineate the condition for ensuring the two-reset assumption in the SIDF analysis methods (\cite{guo2009frequency, saikumar2021loop, ZHANG2024106063}) for closed-loop reset control systems.
\begin{thm}
\label{thm: Delta}
Consider a closed-loop reset control system illustrated in Fig. \ref{fig: RCS_d_n_r_n_n} with a sinusoidal reference input \( r(t) = |R| \sin (\omega t) \), satisfying Assumptions \ref{assum: stable} and \ref{assum: t1}. The system is a multiple-reset system if there exists at least one time instant \( t_\delta \in (0, t_m) \), such that:
   \begin{equation}
   \label{eq: mrcs_cond2}
         \Delta(t_\delta)= |\mathcal{S}_{ls}(\omega)|\sin(\omega t_\delta)+ h_{\beta}(t_\delta) \Theta_s(\omega ) =0,
   \end{equation}
where $h_{\beta}(t)$ is given in \eqref{eq: piece_2} and
\begin{equation}
\label{eq: ht}
\begin{aligned}
 t_m &= \angle\mathcal{S}_{ls}(\omega)/{\omega} + \pi/\omega \cdot \text{sign}(\mathcal{S}_{ls}(\omega),\\ 
\Theta_s(\omega) &= |\Theta_{bl}(\omega)|\sin(\angle \mathcal{S}_{ls}(\omega) - \angle \Theta_{bl}(\omega)),\\
\Theta_{bl}(\omega)&= (j\omega I-A_R)^{-1}B_R\mathcal{C}_{1}(\omega)\mathcal{S}_{bl}(\omega),\\
\mathcal{S}_{ls}(\omega) &= \mathcal{C}_{s}(\omega)\mathcal{C}_{1}(\omega)\mathcal{S}_{bl}(\omega),\\
 \text{sign}(x) &= \begin{cases} 
      0, & \text{if } x > 0, \\
      1, & \text{if } x \leq 0.
   \end{cases}
\end{aligned}
\end{equation}

\end{thm}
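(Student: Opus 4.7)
The strategy is to analytically construct $z_s(t)$ over one steady-state period, identify the two structural resets inherent in the sinusoidal BLS response, and argue that any additional zero of $\Delta$ in $(0,t_m)$ corresponds to an extra reset, rendering the system multiple-reset.

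First, under Assumption \ref{assum: t1} the system coincides with its BLS on $(0,t_1]$. Since $\mathcal{S}_{ls}(\omega)$ is the BLS transfer function from $r$ to $z_s$, the steady-state BLS reset-triggered signal is
\begin{equation*}
z_s^{bl}(t) = |R|\,|\mathcal{S}_{ls}(\omega)|\sin(\omega t + \angle\mathcal{S}_{ls}(\omega)),
\end{equation*}
whose first positive zero defines $t_1$ and whose next zero defines the would-be second reset at $t_m$ of a two-reset system; the piecewise $\text{sign}$ term in the definition of $t_m$ selects the correct zero depending on the phase regime of $\mathcal{S}_{ls}(\omega)$. Analogously, $\Theta_{bl}(\omega)$ is the BLS transfer function from $r$ to the reset controller state, so $x(t_1) = |R|\,|\Theta_{bl}(\omega)|\sin(\omega t_1 + \angle\Theta_{bl}(\omega))$, which, using $\omega t_1 + \angle\mathcal{S}_{ls}(\omega) \in \{0,\pi\} \pmod{2\pi}$, reduces via a sum-of-angles identity to $\mp|R|\,\Theta_s(\omega)$.

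Next, I apply Lemma \ref{lem: piece-wise} with $i=1$ over the interval $(t_1, t_m]$, giving $z_s(t) = z_s^{bl}(t) - h_\beta(t-t_1)\,x(t_1)$. Substituting the above expressions and re-parameterising with $t_\delta = t - t_1$, the first term collapses to $\pm|R|\,|\mathcal{S}_{ls}(\omega)|\sin(\omega t_\delta)$ (by the choice of $t_1$) and the second to $\mp|R|\,h_\beta(t_\delta)\Theta_s(\omega)$. Factoring out a common $\pm|R|$ recovers exactly $\Delta(t_\delta)$ as defined in \eqref{eq: mrcs_cond2}, so detecting zeros of $\Delta$ in the open interval $(0, t_m)$ is equivalent to detecting zeros of $z_s$ strictly between the two structural resets.

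Finally, since a two-reset system has, by definition, exactly two zero crossings of $z_s$ per steady-state period (namely $t_1$ and the second reset at $t_m$), the existence of any $t_\delta \in (0,t_m)$ with $\Delta(t_\delta)=0$ certifies at least one extra reset event and hence the system must be multiple-reset. The main obstacle I anticipate is the careful bookkeeping of the phase and sign conventions of $\mathcal{S}_{ls}(\omega)$, specifically ensuring that the $\pm$ signs arising from the two possible locations of $t_1$ cancel uniformly between the two terms and that the $\text{sign}$ branch in the definition of $t_m$ consistently points to the next BLS zero in both regimes; once these conventions are fixed, the argument reduces to a direct substitution into Lemma \ref{lem: piece-wise} together with a zero-counting argument for $\Delta$.
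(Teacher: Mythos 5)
Your derivation of the algebraic form of $\Delta(t_\delta)$ matches the paper's: Assumption \ref{assum: t1} gives the base-linear expressions $z_s^1(t)=|R||\mathcal{S}_{ls}(\omega)|\sin(\omega t+\angle\mathcal{S}_{ls}(\omega))$ and $x_1(t)=|R||\Theta_{bl}(\omega)|\sin(\omega t+\angle\Theta_{bl}(\omega))$ on $(0,t_1]$, the sum-of-angles identity at $\omega t_1+\angle\mathcal{S}_{ls}(\omega)\in\{0,\pi\}$ reduces $x_1(t_1)$ to $\mp|R|\Theta_s(\omega)$, and Lemma \ref{lem: piece-wise} with $i=1$ together with the shift $t_\delta=t-t_1$ yields $z_s^2(t_1+t_\delta)=\mp|R|\Delta(t_\delta)$. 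That computation is correct and is exactly Steps 1 and 4 of the paper's proof.

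The gap is in your treatment of the window $(0,t_m)$. You assert that the ``next zero'' of the base-linear signal ``defines the would-be second reset at $t_m$'' and then treat $(0,t_m)$ as the full gap between the two structural resets. It is not: the second zero of $z_{bl}(t)$ after $t_1$ is at $t_1+\pi/\omega$, which in the shifted coordinate is $t_\delta=\pi/\omega$, whereas $t_m=\pi/\omega-t_1<\pi/\omega$ (in absolute time $t_1+t_m=\pi/\omega$ is the half-period of the reference, not a reset instant). What actually fixes the location $t_1+\pi/\omega$ of the second structural reset, and what justifies shrinking the search window from $(0,\pi/\omega)$ to $(0,t_m)$, is the paper's Step 2: the steady-state $z_s(t)$ satisfies $z_s(t\pm\pi/\omega)=-z_s(t)$, so its zero-crossings occur $\pi/\omega$-periodically. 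Hence any zero of $z_s$ in $(\pi/\omega,\,t_1+\pi/\omega)$ would force a zero in $(0,t_1)$, contradicting that $t_1$ is the first reset of the cycle; extra resets can therefore only lie in $(t_1,\pi/\omega)$, i.e.\ $t_\delta\in(0,t_m)$. The same antiperiodicity is also what your final sentence silently needs: one additional zero-crossing at $t^\ast\in(t_1,\pi/\omega)$ is mirrored at $t^\ast+\pi/\omega$, so the count per $2\pi/\omega$ cycle rises to at least four, which is what certifies ``more than two resets per cycle.'' Without this argument your definition of $t_m$ is unmotivated and the zero-counting step does not close. (A smaller bookkeeping point you should also make explicit: the Lemma \ref{lem: piece-wise} expression for $z_s^2$ is only valid on $(t_1,t_2]$, so one must note that the first zero of $\Delta$ in $(0,t_m)$ either coincides with $t_2$ or is preceded by $t_2$, and in either case an extra reset lands in $(t_1,\pi/\omega)$.)
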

\begin{proof}
    The proof is provided in \ref{appendix: proof for Theorem 1}.
\end{proof}
Theorem \ref{thm: Delta} is applicable to model-based reset control. To use it, first, the FRF data of the plant \(\mathcal{P}(s)\), is measured, and system identification methods are employed to derive the system model. Then, Theorem \ref{thm: Delta} is applied to identify the multiple-reset frequency range in sinusoidal-input closed-loop reset systems. However, if the system identification is inaccurate, the accuracy of Theorem \ref{thm: Delta} may also be compromised. Additionally, deviations may arise from Assumption \ref{assum: t1} if the transient response exhibits large impact on the steady-state behavior. These deviations will be discussed and validated through case studies in Section \ref{result 1}.

Based on Theorem \ref{thm: Delta}, Remark \ref{rem: 2rcs} establishes the two-reset condition for the SIDF analysis of closed-loop reset systems.
\begin{rem}
\label{rem: 2rcs}
The SIDF analysis for closed-loop reset systems assumes a two-reset condition. This condition holds if, for all frequencies \(\omega\) within the SIDF analysis frequency range, the criteria outlined in Theorem \ref{thm: Delta} is not met.
\end{rem}

\vspace{-0.3cm}
\subsection{Simulations and Experimental Validation of Theorem \ref{thm: Delta}}
\label{result 1}
To validate Theorem \ref{thm: Delta}, six CI-based reset controllers are designed and implemented on the precision motion system \(\mathcal{P}(s)\) defined in \eqref{eq:P(s)} as case studies. CI-based reset control systems are chosen for this validation because they are easily implemented within the classical PID control framework, but they often encounter multiple-reset actions in SIDF analysis \cite{saikumar2021loop}. Ensuring the reliability of their SIDF analysis would facilitate their practical application. The systems are configured with the following parameters:
\begin{enumerate}
    \item Case$_1$: a PCID control system, using the same design parameters outlined in Section \ref{sec: problem state}.
    \item Case$_2$: $\mathcal{C}_r$ is built on a BLC $\mathcal{C}_l(s)={125.7}/{s}$ with $\gamma=0$, $\mathcal{C}_1(s)=\mathcal{C}_2(s)=\mathcal{C}_s(s)=\mathcal{C}_4(s)=1$, $\mathcal{C}_3(s)=40.0\cdot({s/711.1+1})({s/(8.8\times10^3)+1})\cdot{1}/({s/(2.5\times10^4)+1})$.
    \item Case$_3$: $\mathcal{C}_r$ is built on a BLC $\mathcal{C}_l(s)={125.7}/{s}$ with $\gamma=0$, $\mathcal{C}_1(s)=\mathcal{C}_2(s)=\mathcal{C}_s(s)=\mathcal{C}_4(s)=1$, $\mathcal{C}_3(s)= 25.0\cdot({s/327.7+1})({s/(4.8\times10^3)+1})\cdot{1}/({s/(1.3\times10^4)+1})$.
    \item Case$_4$: $\mathcal{C}_r$ is built on a BLC $\mathcal{C}_l(s)={47.1}/{s}$ with $\gamma=0$, $\mathcal{C}_1(s)=\mathcal{C}_2(s)=\mathcal{C}_s(s)=\mathcal{C}_4(s)=1$, $\mathcal{C}_3(s)= 24.0\cdot({s/216.6+1})/({s/(4.1\times10^3)+1})\cdot(1+{94.2}/{s})\cdot{1}/{(s/(9.4\times10^3)+1)}$.
    \item Case$_5$: $\mathcal{C}_r$ is built on a BLC $\mathcal{C}_l(s)={94.2}/{s}$ with $\gamma=0.3$, $\mathcal{C}_1(s)=\mathcal{C}_2(s)=\mathcal{C}_s(s)=\mathcal{C}_4(s)=1$, $\mathcal{C}_3(s)= 20.5\cdot{(s/196.1+1)}/{(s/(4.5\times10^3)+1)}\cdot(1+{94.2}/{s})\cdot{1}/{(s/(9.4\times10^3)+1)}$.
    \item Case$_6$: $\mathcal{C}_r$ is built on a BLC \(\mathcal{C}_l = {(30\pi)}/{s}\) with the reset value $\gamma = 0$, \(\mathcal{C}_1(s) = {1/{(s/(150\pi) + 1)}}\), \(\mathcal{C}_s(s) = {(s+1)}/{(s+2)}\), \(\mathcal{C}_2(s) = 1\), \(\mathcal{C}_3(s) = 20.5 \cdot (s/(150\pi) + 1)/(s/(3000\pi) + 1) \cdot {(s/(62.5\pi) + 1)}/{(s/(1440\pi) + 1)} \cdot {(1+15\pi/s)} \cdot {1}/{(s/(3000\pi) + 1)}\).
\end{enumerate}
All systems have been verified to be stable and convergent.

In these six case studies, multiple-reset actions occur at frequencies below a certain frequency, denoted \( f_b \) in predictions from Theorem \ref{thm: Delta} and as \( f_b' \) in simulations, with deviations \(|f_b - f_b'|\) summarized in Table \ref{tb: six_cases_results}. Both prediction and simulation methods sweep the frequency range from 1 Hz to 50 Hz with a step of 1 Hz. At each frequency, the sampling rate is set to \(10^4\). Analysis shows discrepancies between 1 and 4 Hz across the cases, primarily attributed to the exclusion of transient response effects as outlined in Assumption \ref{assum: t1}.
\vspace{-0.25cm}
\begin{table}[htp]
\caption{The Theorem \ref{thm: Delta}-predicted and simulated boundary frequencies $f_b$ and $f'_b$ that separate the two-reset and multiple-reset systems, as well the computation time in Case$_1$ to Case$_6$.}
\label{tb: six_cases_results}
\centering
    \setlength{\tabcolsep}{5pt} 
    \renewcommand{\arraystretch}{1.5} 
\fontsize{8pt}{8pt}\selectfont
\resizebox{0.9\columnwidth}{!}{
\begin{tabular}{|c|c|c|c||c|c|}
\hline
Systems & ${f_b}$ [Hz] & ${f'_b}$ [Hz] & $|f_b-f'_b|$ [Hz] & Prediction Time [s] & Simulation Time [s] \\ \hline 
Case$_1$ & {30} & {32} & {2} & 1.38 & 356.63 \\ 
Case$_2$ & {39} & {40} & {1} & 1.00 & 422.51 \\ 
Case$_3$ & {37} & {41} & {4} & 1.32 & 386.97 \\ 
Case$_4$ & {34} & {32} & {2} & 1.56 & 413.28 \\ 
Case$_5$ & {37} & {33} & {4} & 1.28 & 502.76 \\ 
Case$_6$ & {38} & {42} & {4} & 0.96 & 370.60 \\ \hline
\end{tabular}
}
\end{table}

\vspace{-0.25cm}
Despite deviations of 1–4 Hz between the simulation results and the predictions from Theorem \ref{thm: Delta}, the prediction method offers substantial time-saving benefits. Identifying multiple-reset occurrences through simulation or using the numerical method in \cite{dastjerdi2022closed} requires calculating the time response at each frequency across the entire operational frequency range via a \texttt{for} loop in MATLAB, followed by counting the reset instants per steady-state cycle. In contrast, Theorem \ref{thm: Delta} streamlines this process. Table \ref{tb: six_cases_results} presents a comparison of computation times for the prediction and simulation methods. Results show that Theorem \ref{thm: Delta} achieves a reduction in computation time by around 300-fold compared to the simulation approach.

If extreme precise identification of multiple-reset actions is needed, Theorem \ref{thm: Delta} can be utilized for initial estimation. Subsequent simulations can then focus on the predicted frequency range, ensuring both accuracy and efficiency in pinpointing multiple-reset occurrences.

To further validate Theorem \ref{thm: Delta}, Figure \ref{fig: C1_results} presents experimentally measured reset-triggered signals \(z_s(t)\) for systems Case\(_1\) and Case\(_6\) in response to a reference input of \(r(t) = 1 \times 10^{-6} \sin(2\pi f t)\) [m]. Testing was conducted at the predicted threshold frequency \(f = f_b\) Hz, as well as at \(f = f_b \pm 10\) Hz, over two steady-state cycles. The results show that at \((f_b - 10)\) Hz, the systems exhibit multiple-reset behavior, while at \((f_b + 10)\) Hz, they display two reset instants per cycle, characteristic of a two-reset system. At the predicted threshold frequency \(f_b\), the systems demonstrate 3–4 reset instants per cycle, indicating a transitional behavior between two-reset and multiple-reset categories. These observations confirm that \(f_b\) serves as a boundary frequency for distinguishing two-reset from multiple-reset actions, thereby validating Theorem \ref{thm: Delta} within a 10 Hz tolerance.
\vspace{-0.3cm}
\begin{figure}[htp]
    \centerline{\includegraphics[width=0.4\textwidth]{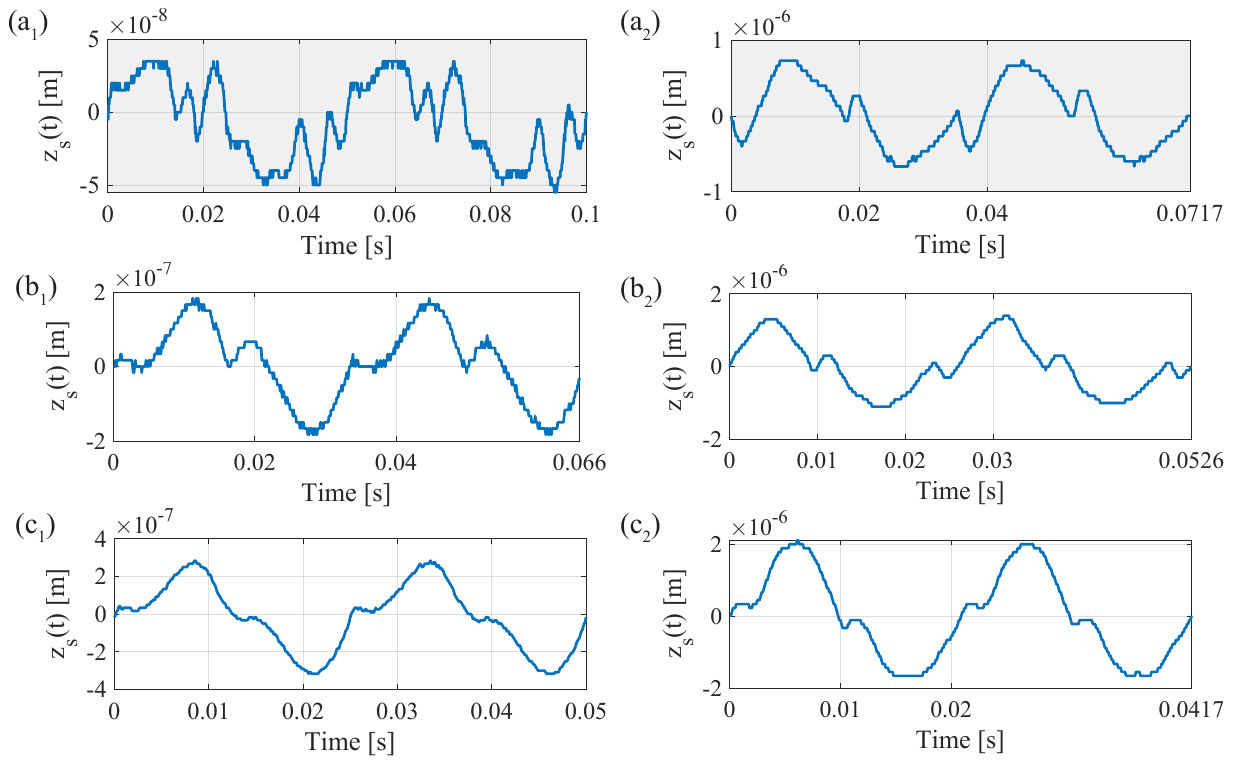}}
	\caption{Experimentally measured steady-state reset-triggered signal \( z_s(t) \) for Case\(_1\) with input frequencies of (a$_1$) 20 Hz, (b$_1$) \(f_b = 30\) Hz, and (c$_1$) 40 Hz. Steady-state reset-triggered signal \( z_s(t) \) for Case\(_6\) with input frequencies of (a$_2$) 28 Hz, (b$_2$) \(f_b = 38\) Hz, and (c$_2$) 48 Hz. Gray-shaded regions indicate multiple-reset systems.}
	\label{fig: C1_results}
\end{figure}

\vspace{-0.4cm}
\section{Analysis and Design of Shaped Reset Control Systems}
\label{sec: New Shaped Reset Systems}
Multiple-reset actions in sinusoidal-input closed-loop reset systems are indicative of high-magnitude high-order harmonics. To reduce these harmonics, this section introduces a shaped reset control strategy. First, Lemma \ref{lem: stair_step} provides an analytical decomposition of the steady-state reset-triggered signal \(z_s(t)\) in such systems into a base-linear trajectory and a nonlinear component. Building on this, Theorem \ref{thm: beta_closed_loop_reset} defines a function \(\beta_n(\omega)\), which quantifies the presence of high-order harmonics in \(z_s(t)\). This function serves as the foundation for designing a shaped reset control approach to reduce high-order harmonics.
\begin{lem}
\label{lem: stair_step}
Consider a closed-loop reset control system as shown in Fig. \ref{fig: RCS_d_n_r_n_n}, with a sinusoidal reference input \( r(t) = |R| \sin (\omega t) \) and adhering to Assumptions \ref{assum: stable} and \ref{assum: t1}. Let \(\mu\) denote the number of reset instants occurring within a half \(\pi/\omega\)-cycle. In this system, the steady-state reset-triggered signal \( z_s(t) \) is composed of two components: a base-linear element \( z_{bl}(t) \) and a nonlinear element \( z_{nl}(t) \), expressed as
\begin{equation}
\label{eq:zs,znl,zbl}
\begin{aligned}
z_s(t) & = z_{bl}(t) + z_{nl}(t),\\
z_{bl}(t) &= |R|\cdot|\mathcal{S}_{ls}(\omega)|\sin(\omega t +\angle \mathcal{S}_{ls}(\omega) ),\\
z_{nl}(t) &=  -\sum\nolimits_{n=1}^{\infty}\mathscr{F}^{-1}[\mathcal{C}_s(n\omega)\mathcal{T}_\beta(n\omega)D_s^n(\omega)],
\end{aligned}
\end{equation}
where
\begin{equation}
\label{eq:zs,znl,zbl2}
    \begin{aligned}  
\mathcal{T}_\beta(n\omega) &= \mathcal{T}_\alpha(n\omega)\cdot jn\omega,\\
D_s^n(\omega) &= \frac{2(A_\rho-I)}{n\pi}\sum\nolimits_{i=1}^{i=\mu}\mathscr{F} [x(t_i)\sin(n\omega (t-t_i))].
    \end{aligned}
\end{equation}
In \eqref{eq:zs,znl,zbl} and \eqref{eq:zs,znl,zbl2}, \(\mathcal{T}_\alpha(\omega)\) and \(\mathcal{S}_{ls}(\omega)\) are defined as in \eqref{eq: piece_2} and \eqref{eq: ht}, respectively, and \(x(t_i)\) denotes the state of the reset controller \(\mathcal{C}_r\) at the reset instant \(t_i\).
\end{lem}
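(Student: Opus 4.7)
The plan is to leverage Lemma~\ref{lem: piece-wise} and convert the accumulated reset contributions into a harmonic series by Fourier-analyzing the resulting $T$-periodic signal. On $(0,t_1)$, Assumption~\ref{assum: t1} lets me replace the reset system by its Base Linear System, so the steady-state response to $r(t)=|R|\sin(\omega t)$ propagated through the $r$-to-$z_s$ transfer function $\mathcal{S}_{ls}(\omega)=\mathcal{C}_s(\omega)\mathcal{C}_1(\omega)\mathcal{S}_{bl}(\omega)$ is exactly $z_s^1(t)=|R|\,|\mathcal{S}_{ls}(\omega)|\sin(\omega t+\angle\mathcal{S}_{ls}(\omega))=z_{bl}(t)$. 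Telescoping the third relation in~\eqref{eq: xi, zi, zsi} across every reset instant in one cycle then gives
\begin{equation*}
z_s(t)=z_{bl}(t)-\sum_{i:\,t_i<t}h_\beta(t-t_i)\,x(t_i),
\end{equation*}
which exhibits $z_{nl}:=z_s-z_{bl}$ as a superposition of post-reset transients filtered by $\mathcal{C}_s(\omega)\mathcal{T}_\alpha(\omega)$.

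Next I would reinterpret each jump $x_c(t_i^+)-x_c(t_i^-)=(A_\rho-I)x(t_i)$ as a Dirac impulse of weight $(A_\rho-I)x(t_i)$ injected into the reset-controller state equation, so that $z_{nl}(t)$ becomes the steady-state response of the base-linear loop driven by the resulting $T$-periodic impulse train with $T=2\pi/\omega$. A Poisson-summation step produces the Fourier coefficient of this train at harmonic $n\omega$ as $\tfrac{1}{T}(A_\rho-I)\sum_{i=1}^{2\mu}x(t_i)e^{-jn\omega t_i}$. The half-period antisymmetry $t_{i+\mu}=t_i+\pi/\omega$, $x(t_{i+\mu})=-x(t_i)$, inherited from $r(t+\pi/\omega)=-r(t)$ and uniqueness of the converged periodic solution, collapses this sum onto $\sum_{i=1}^{\mu}$ with a $1-(-1)^n$ factor; pairing the $\pm n\omega$ spectral contributions then rewrites the result through the $\sin(n\omega(t-t_i))$ kernel that appears in $D_s^n(\omega)$, while the extra $jn\omega$ in $\mathcal{T}_\beta(n\omega)=jn\omega\,\mathcal{T}_\alpha(n\omega)$ emerges from the conversion between state-jump and state-impulse descriptions. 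Multiplying harmonic-by-harmonic by $\mathcal{C}_s(n\omega)\mathcal{T}_\beta(n\omega)$ and applying $\mathscr{F}^{-1}$ delivers the claimed series for $z_{nl}$.

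The main obstacle is the transition from the one-sided cumulative sum within a single cycle---which captures only the transients born in that cycle---to the true $T$-periodic signal $z_{nl}$, which inherits unexpired transients from every past cycle. Establishing absolute convergence of the resulting infinite superposition relies on stability of the base-linear system guaranteed by Assumptions~\ref{assum: open-loop stable} and~\ref{assum: stable}, and the subsequent interchange of summation with the Fourier transform together with careful bookkeeping of the normalization constants---notably the $2/(n\pi)$ in $D_s^n$ produced jointly by Poisson summation and the half-period symmetry, and the $jn\omega$ factor carried by $\mathcal{T}_\beta$---must be tracked with precision to recover the stated form exactly.
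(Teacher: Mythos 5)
Your proposal is correct and follows essentially the same route as the paper's proof: identify $z_{bl}(t)$ on $(0,t_1)$ via Assumption \ref{assum: t1}, use the half-period antisymmetry $x(t_i+\pi/\omega)=-x(t_i)$ to reduce the periodic reset-induced forcing to odd harmonics indexed by the $\mu$ resets per half cycle, and filter harmonic-by-harmonic through the base-linear loop using Lemma \ref{lem: piece-wise}. The only difference is representational: you model the jumps as a $2\pi/\omega$-periodic impulse train and recover the $2/(n\pi)$ and $jn\omega$ factors via Poisson summation and the jump-to-impulse conversion, whereas the paper pairs each step $h_i(t)$ with its sign-flipped partner at $t_i+\pi/\omega$ into a square wave and sums these into the stair-step signal $d_s(t)$ --- the two forcings differ by exactly one time derivative, which is the $jn\omega$ the paper absorbs into $\mathcal{T}_\beta$.
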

\begin{proof}
 The proof is provided in \ref{appendix: proof for Lemma 2}.
\end{proof}

In the reset triggered signal $z_s(t)$, the nonlinear component \( z_{nl}(t) \) in \eqref{eq:zs,znl,zbl} can be represented as the sum of its harmonic components, expressed as:
\begin{equation}
\label{eq: z_{nl}^n,z_nl}
\begin{aligned}
z_{nl}(t) &= \sum\nolimits_{n=1}^{\infty} z_{nl}^n(t),  \\
z_{nl}^n(t) &= \sum\nolimits_{n=1}^{\infty} |Z_{nl}^n|\sin(n\omega t + \angle Z_{nl}^n), 
\end{aligned}
\end{equation}
where $|Z_{nl}^n|$ and $\angle Z_{nl}^n$ represent the magnitude and the phase of the signal $z_{nl}^n(t)$.

Let \( Z_{nl}^n(\omega) \) represent the Fourier transform of the \( n \)-th harmonic \( z_{nl}^n(t) \) within \( z_{nl}(t) \). The following theorem provides the magnitude ratio of the higher-order harmonics (\( n > 1 \)) to the first-order harmonic (\( n = 1 \)) in $z_{nl}(t) $.
\begin{thm}
\label{thm: beta_closed_loop_reset}
Consider the closed-loop reset control system depicted in Fig. \ref{fig: RCS_d_n_r_n_n}, with a sinusoidal reference input \( r(t) = |R| \sin (\omega t) \), and assume it satisfies Assumptions \ref{assum: stable} and \ref{assum: t1}. At the input frequency \( \omega \), the magnitude ratio of the higher-order harmonics (where \( n > 1 \)) to the first-order harmonic (where \( n = 1 \)) in \( z_{nl}(t) \) in \eqref{eq:zs,znl,zbl} is given by:
\begin{equation}
\label{eq: beta_def}
    \beta_n(\omega) = \frac{|Z_{nl}^n(\omega)|}{|Z_{nl}^1(\omega)|} = \frac{|\mathcal{C}_s(n\omega)\mathcal{T}_\beta(n\omega)|}{n|\mathcal{C}_s(\omega)\mathcal{T}_\beta(\omega)|}, \text{ where }n>1.
\end{equation}
\end{thm}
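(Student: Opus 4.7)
The plan is to extract the $n$-th harmonic directly from the spectral decomposition of $z_{nl}(t)$ given in Lemma \ref{lem: stair_step} and then form the magnitude ratio term by term. Since the sum
\begin{equation*}
z_{nl}(t) = -\sum\nolimits_{n=1}^{\infty}\mathscr{F}^{-1}\bigl[\mathcal{C}_s(n\omega)\mathcal{T}_\beta(n\omega)D_s^n(\omega)\bigr]
\end{equation*}
is already organized by harmonic order, I would identify $z_{nl}^n(t) = -\mathscr{F}^{-1}[\mathcal{C}_s(n\omega)\mathcal{T}_\beta(n\omega)D_s^n(\omega)]$ with the $n$-th harmonic component appearing in \eqref{eq: z_{nl}^n,z_nl}. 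Taking the Fourier transform gives $Z_{nl}^n(\omega) = -\mathcal{C}_s(n\omega)\mathcal{T}_\beta(n\omega)D_s^n(\omega)$, so the desired ratio decomposes cleanly as
\begin{equation*}
\frac{|Z_{nl}^n(\omega)|}{|Z_{nl}^1(\omega)|} = \frac{|\mathcal{C}_s(n\omega)\mathcal{T}_\beta(n\omega)|}{|\mathcal{C}_s(\omega)\mathcal{T}_\beta(\omega)|}\cdot\frac{|D_s^n(\omega)|}{|D_s^1(\omega)|}.
\end{equation*}

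All that then remains is to show $|D_s^n(\omega)|/|D_s^1(\omega)| = 1/n$. Inspecting the definition of $D_s^n(\omega)$ in \eqref{eq:zs,znl,zbl2}, the explicit factor $1/n$ already appears in the prefactor $2(A_\rho-I)/(n\pi)$, so what must be established is that the residual summation $|\sum_{i=1}^{\mu}\mathscr{F}[x(t_i)\sin(n\omega(t-t_i))]|$ is independent of $n$. I would write the sine in complex-exponential form and retain only the spectral content at the harmonic frequency $n\omega$; each summand then contributes a phasor proportional to $x(t_i)e^{-jn\omega t_i}$ up to a common constant. Under the two-reset condition established in Section \ref{sec: main results1}, which is precisely the operating regime in which the SIDF framework, and hence the present analysis, is intended to apply, exactly one reset instant occurs per half-cycle, so $\mu = 1$ and the sum collapses to a single term of magnitude $|x(t_1)|$, manifestly independent of $n$. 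Cancelling this common factor between numerator and denominator leaves only the $1/n$ scaling.

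The main obstacle, and the subtlety I would flag explicitly in the write-up, is exactly this cancellation of the state-dependent summation. In a genuine multiple-reset regime the phasors $x(t_i)e^{-jn\omega t_i}$ interfere differently at different harmonic orders, so the clean $1/n$ scaling would break and the ratio would inherit an $n$-dependent interference pattern. The theorem is therefore internally consistent with its intended use: it serves as the design target for a shaping filter $\mathcal{C}_s$ aimed at suppressing high-order harmonics in precisely the regime where two-reset behaviour, and thus SIDF validity, is guaranteed by the criterion of Theorem \ref{thm: Delta}. Once this point is articulated, assembling the final expression $\beta_n(\omega) = |\mathcal{C}_s(n\omega)\mathcal{T}_\beta(n\omega)|/(n|\mathcal{C}_s(\omega)\mathcal{T}_\beta(\omega)|)$ is immediate.
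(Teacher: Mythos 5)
Your proposal is correct and follows essentially the same route as the paper's proof: identify $z_{nl}^n(t)=-\mathscr{F}^{-1}[\mathcal{C}_s(n\omega)\mathcal{T}_\beta(n\omega)D_s^n(\omega)]$ from Lemma \ref{lem: stair_step}, take Fourier transforms, and form the ratio. Where you go beyond the paper is on the one step it leaves implicit: the paper passes from \eqref{eq: Ds(w),Ds_n(w)} and $Z_{nl}^n(\omega)=-\mathcal{C}_s(n\omega)\mathcal{T}_\beta(n\omega)D_s^n(\omega)$ directly to \eqref{eq: beta_def}, silently assuming $|D_s^n(\omega)|/|D_s^1(\omega)|=1/n$, i.e.\ that the residual phasor sum $\sum_{i=1}^{\mu}x(t_i)e^{-jn\omega t_i}$ has $n$-independent magnitude; you make this explicit and justify it via $\mu=1$ in the two-reset regime. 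The one tension worth noting is that the paper applies $\beta_n(\omega)$ as a design target precisely inside the (pre-shaping) multiple-reset frequency ranges, where $\mu>1$ and the clean $1/n$ scaling you rely on is only approximate -- but that caveat applies equally to the paper's own argument, so it is an observation about the theorem's scope rather than a gap in your proof.
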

\begin{proof}
 The proof is provided in \ref{appendix: proof for Theorem 2}.
\end{proof}

\begin{rem}
\label{rem: beta(w) should be small}
According to \eqref{eq: beta_def}, when \(\beta_n(\omega) \to 0\), \(|Z_{nl}^n(\omega)| \ll |Z_{nl}^1(\omega)|\) holds for \(n > 1\). In this case, from \eqref{eq:zs,znl,zbl}, the reset-triggered signal \(z_s(t)\) can be approximated as \(z_s(t) \approx z_{nl}^1(t) + z_{bl}(t)\), indicating that only the first-order harmonic is present in \(z_s(t)\). This ensures the accuracy of the SIDF analysis. Conversely, the occurrence of multiple reset zero-crossings in \( z_s(t) \), which indicates multiple-reset actions within the system, is driven by high-order harmonics \( z_{nl}^n(t) \) for \( n > 1 \).
\end{rem}

However, due to the inherent nonlinearity of reset control systems, it is not feasible to completely eliminate high-order harmonics (i.e., achieve \(\beta_n(\omega) = 0\)). 

Although high-order harmonics do not always cause issues, they can lead to multiple-reset actions in sinusoidal-input closed-loop systems, compromising the accuracy of SIDF analysis and reducing the reliability of system design and performance predictions. Additionally, high-magnitude high-order harmonics increase the system's sensitivity to high-frequency disturbances and noise. To address this, we identify the multiple-reset frequency ranges as key areas where high-order harmonics should be reduced. Decreasing \(\beta_n(\omega)\) in these ranges improves the accuracy of SIDF analysis and decreases the system's sensitivity to high-frequency noise.

According to \eqref{eq:zs,znl,zbl} and \eqref{eq: beta_def}, when the base-linear component \( z_{bl}(t) \) remains constant, maintaining \(\beta_n(\omega)\) within a bound less than 1, i.e., \(\beta_n(\omega) \leq \sigma_\beta \in (0,1)\), ensures that the ratio \({|Z_{nl}^n(\omega)|}/{|Z_{nl}^1(\omega)|}\) remains within a controlled range, thereby limiting the impact of high-order harmonics.

Based on \eqref{eq: beta_n_derive}, to guide the design of a shaping filter that achieves \(\beta_n(\omega) = \sigma_\beta\), the magnitude condition for \(\mathcal{C}_s\) is given as follows:
\begin{equation}
\label{eq: desired |Cs(w)|}
|\mathcal{C}_s(\omega)| = {n\sigma_\beta}/{|\mathcal{T}_\beta(\omega)|}.   
\end{equation}
Since the reset action is independent of the magnitude of \(\mathcal{C}_s(\omega)\) \cite{banos2012reset}, the value of \(n\) does not affect the system performance. By default, \(n = 3\) is used in \eqref{eq: desired |Cs(w)|}.
Then, the following steps outline the design procedure for shaping filters in reset systems:
\begin{itemize}
    \item Step 1: Start by designing the reset control system with \(\mathcal{C}_s(\omega) = 1\), and use Theorem \ref{thm: Delta} to identify the frequency range where multiple-reset actions occur.
    \item Step 2: Then, select a value \(\sigma_\beta \in (0,1)\), and design the shaping filter \( |\mathcal{C}_s(\omega)| \) using \eqref{eq: desired |Cs(w)|} to achieve \(\beta_n(\omega) = \sigma_\beta\) within the identified multiple-reset frequency range. 
    \item Step 3: Since the introduction of \(\mathcal{C}_s(\omega)\) affects both the magnitude and phase of the first-order harmonics, adjusting other system parameters to compensate for these changes is needed in order to preserve the benefits of the first-order harmonic.
\end{itemize}

A detailed design procedure of an illustrative example following these steps is presented in Section \ref{sec: shaped_reset_control_design}.

\section{Illustrative Example: Designing a PID Shaping Filter in CI-Based Reset Control Systems}
\label{sec: shaped_reset_control_design}
This section details the analysis and design procedure for a shaped reset control systems as an illustrative example. First, Subsection \ref{subsec: PID-shaped reset control design} outlines the design process for a PID shaping filter aimed at reducing high-order harmonics in a CI-based reset control system. Next, Subsection \ref{subsec: limit_cycle_elimination} shows that this PID-shaped reset control system also addresses limit cycle issues in the step responses of reset control systems.

\vspace{-0.3cm}
\subsection{Design Procedure for a PID Shaping Filter to Reduce High-Order Harmonics in a CI-Based Reset Control System}
\label{subsec: PID-shaped reset control design}
The PCID control system, Case\(_1\), with design parameters outlined in Section \ref{sec: problem state}, is chosen as the example due to its high-order harmonic issues, as shown in Fig. \ref{fig: mrcs_example_pcid}(a$_1$). 

Following the steps outlined in Section \ref{sec: New Shaped Reset Systems}, Theorem \ref{thm: Delta} is applied to identify the multiple-reset frequency range for the PCID control system, Case\(_1\), as \((0, 30)\) Hz. The value of 30 Hz is determined by sweeping the entire frequency range with a 1 Hz step size. For improved accuracy, smaller step resolutions can be utilized. Within this identified frequency range, reducing high-order harmonics is needed.

Next, by setting \(\sigma_\beta = 0.6\) and applying equation \eqref{eq: desired |Cs(w)|}, the resulting magnitude plot of \(|\mathcal{C}_s(\omega)|\) is shown in Fig. \ref{desired_Cs_pi}. The value \(\sigma_\beta = 0.6\) is chosen based on experimental evaluations to achieve improved system performance, as demonstrated in Section \ref{sec: Experiments Results}. In practice, other values of \(\sigma_\beta \in (0, 1)\) may also be selected, depending on the specific requirements for high-order harmonic reduction in the system.

\vspace{-0.3cm}
\begin{figure}[htp]
    \centering
    \centerline{\includegraphics[width=0.8\columnwidth]{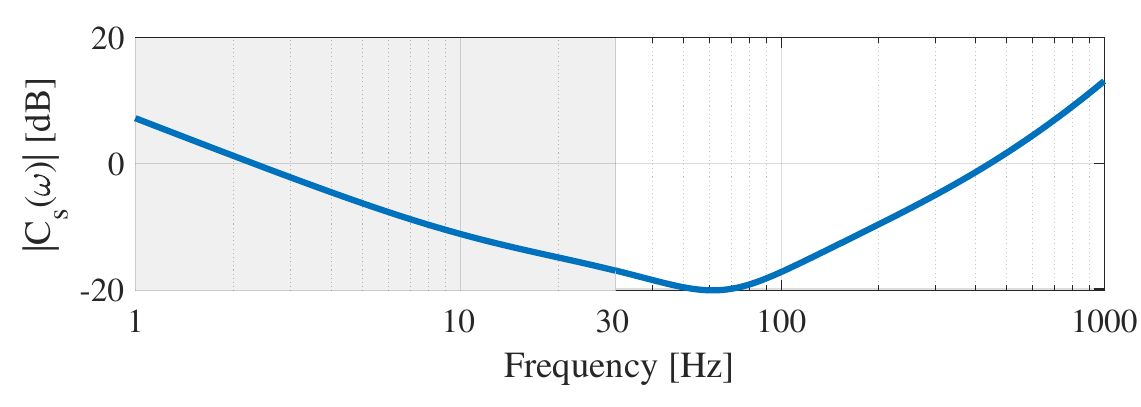}}
    \caption{The plot of $|\mathcal{C}_s(\omega)|$ meeting the condition of $\beta_n(\omega) = 0.6$ based on \eqref{eq: desired |Cs(w)|}.}
	\label{desired_Cs_pi}
\end{figure}

\vspace{-0.3cm}
From Fig. \ref{desired_Cs_pi}, the shaping filter can be simplified as the LTI PI shaping filter, given by
\begin{equation}
\label{eq: cs_pi}
\mathcal{C}_s(s) = 1+{\omega_\alpha}/{s},  
\end{equation}
where $\omega_\alpha = 2\pi \cdot 30 = 60\pi$ [rad/s].
The objective of the shaping filter design is to attenuate high-order harmonics while preserving the advantages of the first-order harmonic. However, as discussed in Appendix \ref{appendix: proof for lemma Cs_bw>0}, the PI shaping filter \eqref{eq: cs_pi} introduces a phase lag in the first-order harmonic. To maintain overall system performance, it is essential to compensate for this phase lag using a PID shaping filter, as introduced in Remark \ref{lem: angle cs_bw>0}.
\begin{rem}
\label{lem: angle cs_bw>0}
Given that the PI shaping filter \eqref{eq: cs_pi} reduces \(\beta_n(\omega)\) for frequencies \(\omega < \omega_\alpha\) in a CI-based reset control system, the PID shaping filter \(\mathcal{C}_s(s)\) further enhances performance by reducing \(\beta_n(\omega)\) for frequencies \(\omega < \omega_\alpha\) and meanwhile introducing a phase lead at the bandwidth frequency of \(\omega_{BW}\), as given by:
\begin{equation}
\label{eq: Cs}
\begin{aligned}
\mathcal{C}_s(s) = k_s\cdot \bigg(1+\frac{\omega_\alpha}{s}\bigg)\cdot\frac{s/\omega_\beta+1}{s/\omega_\eta +1}&\cdot\frac{1}{s/\omega_\psi+1},
\end{aligned}
\end{equation}
with the following conditions:
\begin{equation}
    \begin{cases}
    \omega_\beta>0 \text{ and } \omega_\eta>\omega_{BW} \text{ ensuring } \angle \mathcal{C}_s(\omega_{BW}) > 0,\\
    \omega_\psi>\omega_\eta.
    \end{cases}
\end{equation}
Since reset actions are amplitude-independent \cite{banos2012reset}, the value of \( k_s \neq 0 \in \mathbb{R} \) does not impact system performance. By default, \( k_s = 1 \). The introduction of the derivative element in \eqref{eq: Cs} may amplify high-frequency noise, potentially causing multiple-reset actions. Therefore, the low-pass filter is incorporated to attenuate high-frequency components in \( z_s(t) \). The cutoff frequency \( \omega_\psi \) for the LPF is chosen based on the characteristics of the noise present in practice.
\end{rem}
\begin{proof}
    The proof is provided in \ref{appendix: proof for lemma Cs_bw>0}.
\end{proof} 
In this study, reset systems incorporating the shaping filter from \eqref{eq: Cs} are termed PID-shaped reset control systems. 

CI-based reset systems, including Case$_1$, are built upon the generalized CI. Therefore, the phase margin introduced by the PID shaping filter in the CI-based reset system is first applied to the shaped CI, and then propagated throughout the entire system. Remark \ref{rem: phase_lead_shaped_CI} illustrates the phase lead imparted by the PID shaping filter to the shaped CI.
\begin{rem}
\label{rem: phase_lead_shaped_CI}
Based on the proof outlined in \ref{appendix: proof for lemma Cs_bw>0}, the phase lead introduced by the shaping filter in \eqref{eq: Cs} to a shaped CI is given by
\begin{equation}
\label{eq: phi_lead}
\phi_{\text{lead}} = \phi_s(\omega_{BW}) - \phi_0,    
\end{equation}
where \(\phi_s(\omega_{BW})\) denotes the phase margin of the shaped CI system when the shaping filter \(\mathcal{C}_s(s)\) in \eqref{eq: Cs} is applied, and \(\phi_0\) refers to the phase margin of the generalized CI when \(\mathcal{C}_s(s) = 1\). These phase margins are given by:
\begin{equation}
\label{eq: phi0,phi1}
\begin{aligned}
 \phi_0 &= \arctan({-\pi(1+\gamma)}/{(4(1-\gamma))}),\\
 \phi_s(\omega_{BW}) &= \arctan 
\resizebox{0.69\columnwidth}{!}{$\bigg( \frac{4(1-\gamma)\sin(\angle \mathcal{C}_s(\omega_{BW}))\cos(\angle \mathcal{C}_s(\omega_{BW}))-\pi(1+\gamma)}{4(1-\gamma)\cos^2(\angle \mathcal{C}_s(\omega_{BW}))}\bigg)$}.
\end{aligned}
\end{equation}
\end{rem}

Based on \eqref{eq: cs_pi} and Remark \ref{lem: angle cs_bw>0}, the shaping filter \(\mathcal{C}_s(s)\) \eqref{eq: Cs} for Case$_1$ is designed with the following parameters: \(\omega_\alpha = 60\pi\) rad/s, \(\omega_\beta = 1.05\cdot\omega_{BW} = 659.7\) rad/s, \(\omega_\eta = 12\cdot\omega_{BW} =  7.5\times10^3\) rad/s, \(\omega_\psi = 75\cdot\omega_{BW} = 4.7\times10^4\) rad/s, and $k_s = 213$. The Bode plot of \(\mathcal{C}_s(s)\) is shown in Fig. \ref{Cs_n_final}. At frequencies \(\omega < \omega_\alpha\), \(\mathcal{C}_s(s)\) functions as a PI controller. Additionally, its phase at the bandwidth frequency \(\omega_{BW} = 200\pi\) rad/s is \(\angle \mathcal{C}_s(\omega_{BW}) = 21^\circ\).
 
Without the shaping filter, the CI with \(\gamma = 0\) has a phase of \(\phi_0 = -38.1^\circ\) at \(\omega_{BW} = 200\pi\) rad/s, as determined using \eqref{eq: phi0,phi1}. In contrast, by applying the designed shaping filter \(\mathcal{C}_s(s)\), the phase of the shaped CI improves to \(\phi_s = -27.4^\circ\), introducing a phase lead of \(\phi_{\text{lead}} = 10.7^\circ\) in the PCID control system, as calculated from \eqref{eq: phi_lead}. To preserve the phase margin and gain properties of the first-order harmonic, the parameters are set to \(\omega_r = 141.4\) rad/s, \(\gamma = 0.13\), and \(k_r = 1.02\). Under these settings, the phase lead is \(\phi_{\text{lead}} = 0^\circ\).
\vspace{-0.3cm}
\begin{figure}[htp]
    \centering
    \centerline{\includegraphics[width=0.8\columnwidth]{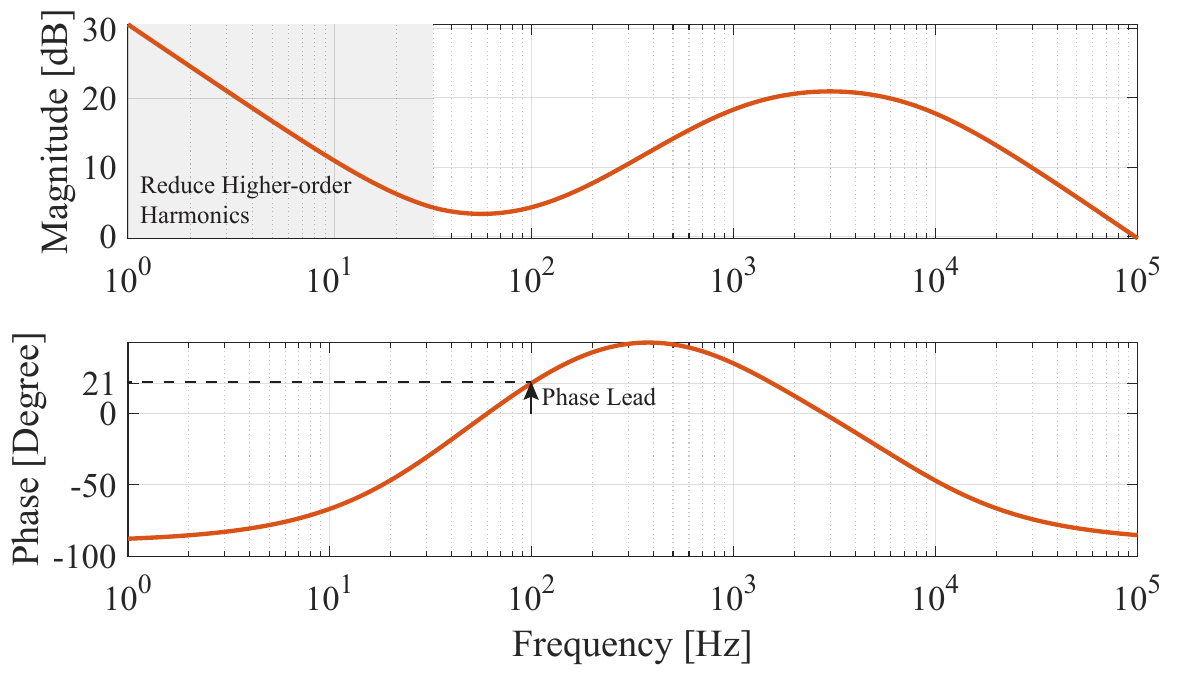}}
    \caption{Bode plot of the shaping filter \(\mathcal{C}_s(s) \).}
	\label{Cs_n_final}
\end{figure}

\vspace{-0.3cm}
Applying the designed PID shaping filter, the Bode plots for the PID, PCID, and shaped PCID controllers—showing both first- and third-order harmonics—are provided in Fig. \ref{shaped_PCID_Ln_final}. Then, Fig. \ref{shaped_PCID_sys_Ln} provides the corresponding Bode plots when these controllers are applied to the plant \(\mathcal{P}(s)\) in \eqref{eq:P(s)}. Collectively, these figures demonstrate that the shaped PCID control system reduces high-order harmonics within the frequency range of (0, 30) Hz, while preserving the gain and phase benefits of the first-order harmonic, compared to the PCID control system.
\vspace{-0.3cm}
\begin{figure}[htp]
    \centering
    \centerline{\includegraphics[width=0.8\columnwidth]{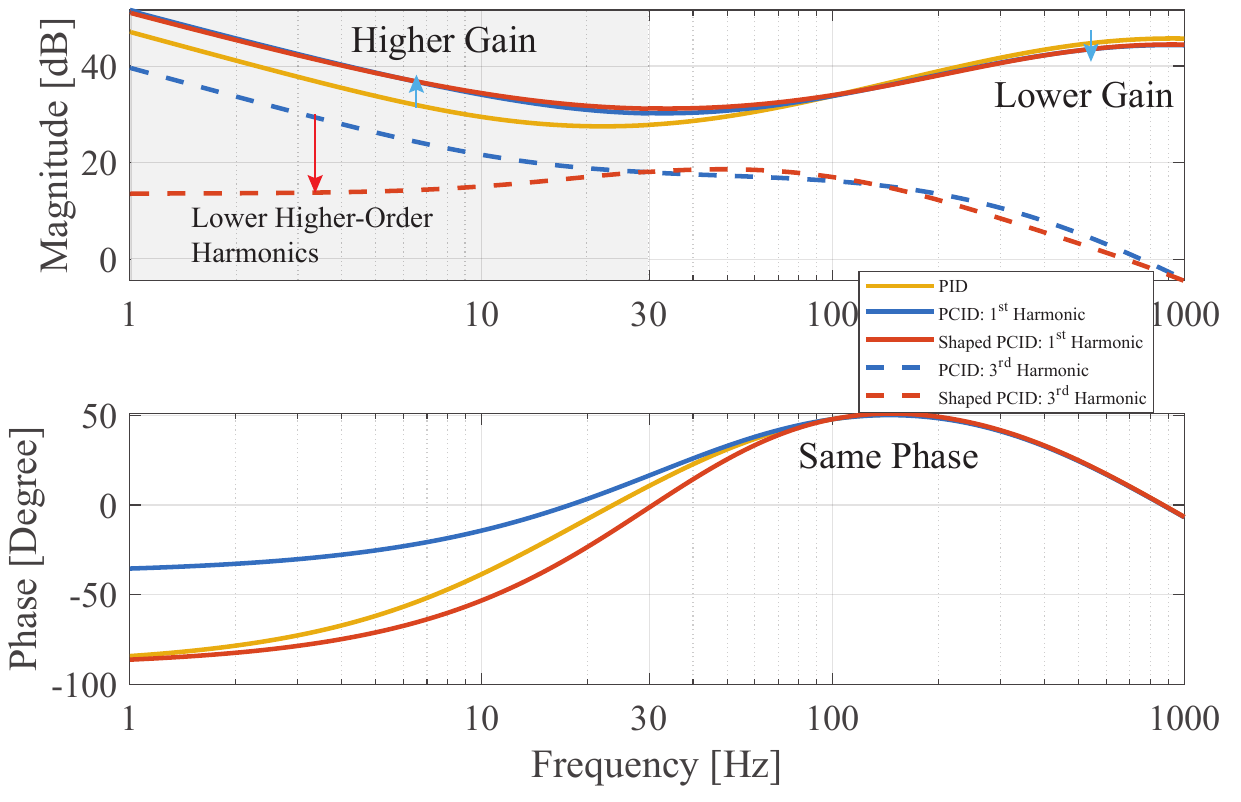}}
    \caption{Bode plots of the PID controller, the first-order and third-order harmonics in the PCID, and shaped PCID controllers. The multiple-reset region \((0,30)\) Hz identified for the PCID system using Theorem \ref{thm: Delta} is shaded in gray.}
	\label{shaped_PCID_Ln_final}
\end{figure}
\vspace{-0.3cm}
\begin{figure}[htp]
    \centering
    \centerline{\includegraphics[width=0.8\columnwidth]{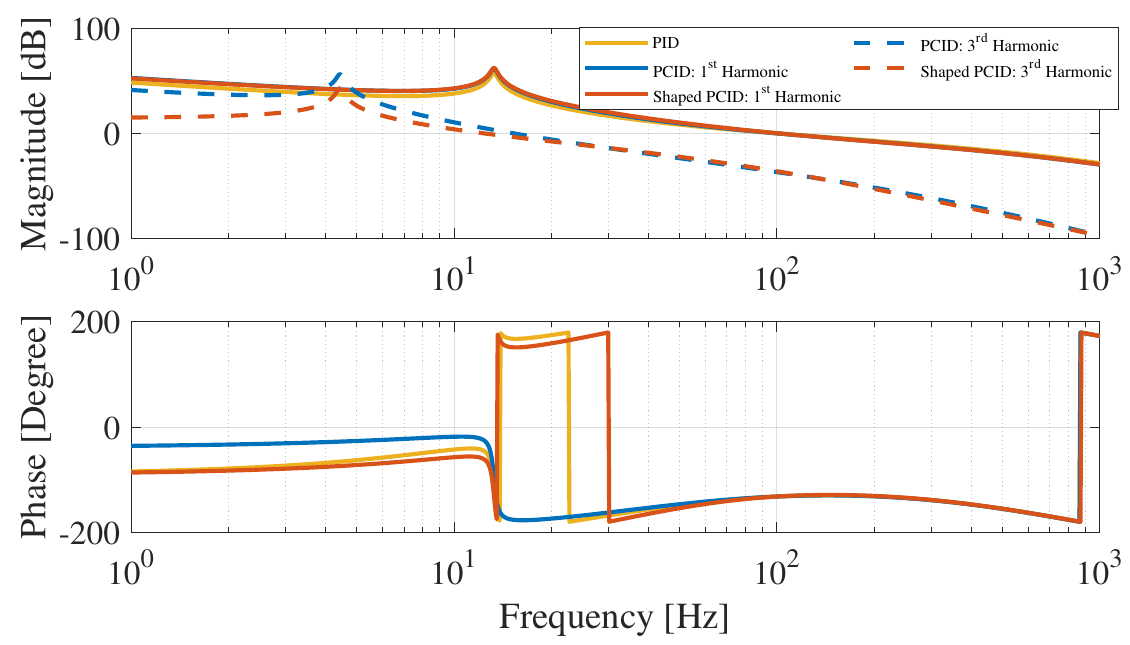}}
    \caption{Bode plots of the open-loop PID control system and the first- and third-order harmonics in the open-loop PCID and shaped PCID control systems on the precision motion stage \(\mathcal{P}(s)\).}
	\label{shaped_PCID_sys_Ln}
\end{figure}

Moreover, the plots of \(\beta_3(\omega)\) for both the closed-loop PCID and shaped PCID control systems are shown in Fig. \ref{beta_w_pcid_spcid}. The shaped PCID control system reduces \(\beta_3(\omega)\), ensuring that \(\beta_3(\omega) < 0.6\). Note that in this shaped PCID control system, the values of \(\beta_n(\omega)\) for \(n > 3\) are smaller than \(\beta_3(\omega)\) and, for clarity, are not displayed. However, they can be computed using \eqref{eq: beta_def}. 

\begin{figure}[htp]
    \centering
    \centerline{\includegraphics[width=0.8\columnwidth]{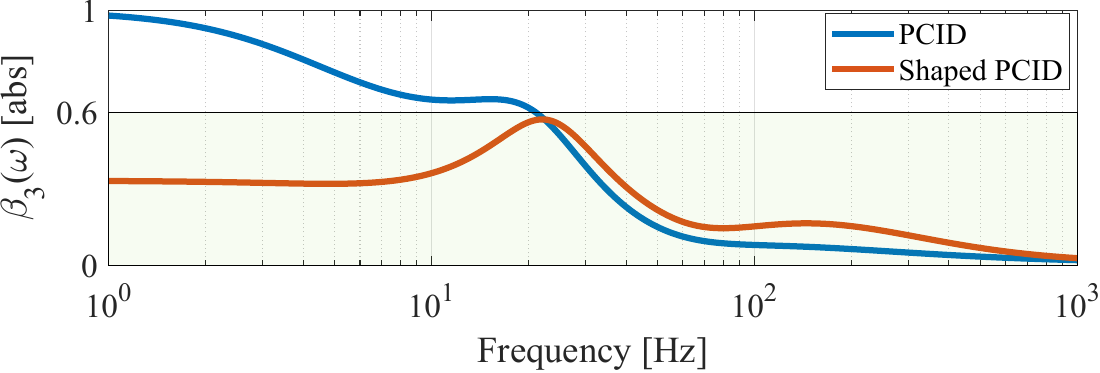}}
    \caption{The plot of \(\beta_3(\omega)\) in the closed-loop PCID and shaped PCID control systems.}
	\label{beta_w_pcid_spcid}
\end{figure}

The results shown in Figures \ref{shaped_PCID_Ln_final} to \ref{beta_w_pcid_spcid} indicate that the PID shaping filter designed in this study reduces high-order harmonics while maintaining the advantages of the first-order harmonic in the PCID system. These improvements are anticipated to enhance the accuracy of SIDF analysis and improve the steady-state precision of the PCID system. Further validation of these enhancements will be provided through simulations and experimental results in Section \ref{sec: Experiments Results}.

\subsection{Eliminating Limit Cycles Using a PID Shaping Filter}
\label{subsec: limit_cycle_elimination}
In addition to reducing high-order harmonics, the PID shaping filter also eliminates the limit cycle issues in the step responses of reset systems.

Consider a closed-loop reset control system in Fig. \ref{fig: RCS_d_n_r_n_n} subjected to an unit step input \(h(t)\), with the Laplace transform of \(h(t)\) given by \(H(s) = 1/s\). In this system, the final value of $z_s(t)$ denoted by \(\lim\limits_{t \to \infty} z_s(t)\) is given by
\begin{equation}
\label{eq: es_infty}
\begin{aligned}
\lim\limits_{t\to \infty} z_s(t) &= \lim\limits_{s\to0} s \cdot Z_s(s) = \lim\limits_{s\to0} s \, \mathcal{C}_s(s) \mathcal{C}_1(s)\mathcal{S}_{bl}(s) \cdot {1}/{s}\\
&= \lim\limits_{s\to0} \mathcal{C}_s(s)\mathcal{S}_{\alpha}(s),
\end{aligned}
\end{equation}
where
\begin{equation}
\mathcal{S}_{\alpha}(s) = \mathcal{C}_1(s)\mathcal{S}_{bl}(s).
\end{equation}
In the reset systems with the shaping filter \(\mathcal{C}_s(s) = 1\), limit cycles occur when the reset-triggered signal continues to trigger the reset actions at steady states, characterized by:
\begin{equation}
\label{eq: es_infty_cs1}
\begin{aligned}
\lim\limits_{t\to \infty} z_s(t) &= \lim\limits_{s\to0}\mathcal{S}_{\alpha}(s) = 0,
\end{aligned}
\end{equation}
while the reset controller's output $m(t)$ does not settle to a steady-state equilibrium at zero; instead, it continues to oscillate persistently around certain non-zero values as \(t \to \infty\), i.e., 
\begin{equation}
  \lim\limits_{t \to \infty} m(t)  = \lim\limits_{s \to 0} s \mathcal{C}_1(s)\mathcal{C}_l(s)\mathcal{S}_{bl}(s) 1/s = \text{constant} \neq0.
\end{equation}
%


The following content demonstrates that the PID shaping filter, as defined in \eqref{eq: Cs}, can eliminate limit cycle issues in reset systems.


The PID shaping filter \(\mathcal{C}_s(s)\) in \eqref{eq: Cs} can be expressed as:
\begin{equation}
\label{eq:pid_cs2}
\mathcal{C}_s(s) = {F(s)}/{s},
\end{equation}
where
\begin{equation}
\label{eq:F(s)}
F(s) =  k_s\cdot(s + \omega_\alpha) \cdot \frac{s / \omega_\beta + 1}{s / \omega_\eta + 1} \cdot \frac{1}{s / \omega_\psi + 1}.
\end{equation}
With the PID shaping filter in \eqref{eq:pid_cs2}, $\lim\limits_{t\to \infty} z_s(t)$ in \eqref{eq: es_infty} is written as
\begin{equation}
\label{eq: linsto0 es}
\begin{aligned}
\lim\limits_{t\to \infty} z_s(t) &= \lim\limits_{s\to0} F(s) \cdot { \mathcal{S}_{\alpha}(s)}/{s}.
\end{aligned}
\end{equation}
From \eqref{eq:F(s)}, the value of \(F(s)\) as \(s \to 0\) is given by
\begin{equation}
\label{eq: sto0F(s)}
\lim\limits_{s\to0} F(s) = k_s\cdot \omega_\alpha = \text{constant} \neq0.
\end{equation}
From \eqref{eq: es_infty_cs1}, the transfer function \( \mathcal{S}_{\alpha}(s)\) can be expressed in terms of polynomial terms, given by
\begin{equation}
\label{eq: S_bl(s)}
\begin{aligned}
\mathcal{S}_{\alpha}(s) &= \frac{n_1 s^n + n_2 s^{n-1} + \cdots + n_qs}{m_1 s^m + m_2 s^{m-1} + \cdots + m_q}, \\
& n_1, \dots, n_q, m_1, \dots, m_q \in \mathbb{R}, \quad m_q \neq 0, \, n_q \neq 0.
\end{aligned}
\end{equation}
From \eqref{eq: S_bl(s)}, we find that:
\begin{equation}
\label{eq: sS_bl(s)}
\begin{aligned}
\lim\limits_{s\to0} \frac{\mathcal{S}_{\alpha}(s)}{s} &= \lim\limits_{s\to0} \frac{n_1 s^{n-1} + n_2 s^{n-2} + \cdots + n_q}{m_1 s^m + m_2 s^{m-1} + \cdots + m_q} \\
&= {n_q}/{m_q} = \text{constant}\neq0.
\end{aligned}
\end{equation}
Combining \eqref{eq: linsto0 es}, \eqref{eq: sto0F(s)}, and \eqref{eq: sS_bl(s)}, we derive:
\begin{equation}
\label{eq: linsto0 es2}
\lim\limits_{t\to \infty} z_s(t) = \lim\limits_{s\to0} F(s) \cdot {\mathcal{S}_{\alpha}(s)}/{s} = \text{constant}\neq0.
\end{equation}
Thus, from \eqref{eq: linsto0 es2}, by applying the PID shaping filter \(\mathcal{C}_s(s)\), as specified in \eqref{eq: Cs}, the limit-cycle behaviors in reset systems are eliminated.

\section{Evaluation of the Effectiveness of the Shaped Reset Control System via Simulations and Experiments}
\label{sec: Experiments Results}

This section presents simulation and experimental results to validate the effectiveness of the PID-shaped reset system designed in Section \ref{sec: shaped_reset_control_design} in comparison to both linear and reset systems, as applied to the precision motion stage in Fig. \ref{fig: spider}.

\subsection{Simulation Results: Enhanced Steady-State Performance and Improved Accuracy of SIDF Analysis}
To evaluate the closed-loop performance of the shaped PCID control system, Fig. \ref{e_infty_final} presents the simulated \(||e||_\infty/||r||_\infty\) \eqref{eq: df_s1} for the PID, PCID, and shaped PCID systems. The shaped PCID system demonstrates the lowest \(||e||_\infty/||r||_\infty\) compared to the other two systems, indicating improved precision. This enhancement is attributed to the shaped PCID control system's superior gain properties in the first-order harmonic compared to the PID control system, while reducing high-order harmonics relative to the PCID control system, as demonstrated in Figures \ref{shaped_PCID_Ln_final} and \ref{shaped_PCID_sys_Ln}.
\begin{figure}[htp]
    \centering
    \centerline{\includegraphics[width=0.4\textwidth]{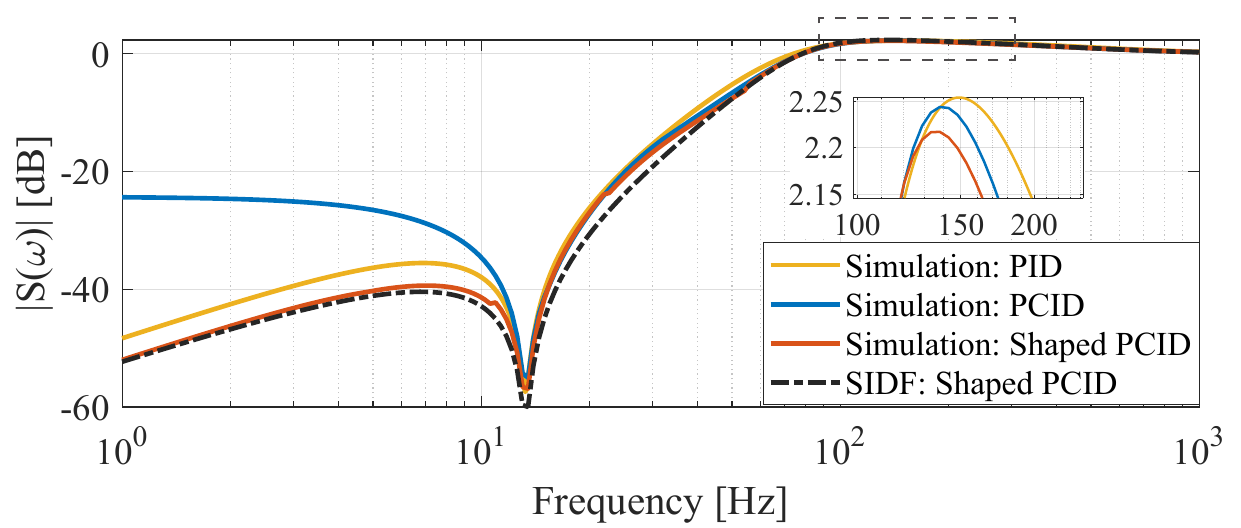}}
    \caption{Plots of simulated \(||e||_\infty/||r||_\infty\) for the PID, PCID, and shaped PCID control systems, alongside the SIDF-predicted \(||e||_\infty/||r||_\infty\) for the shaped PCID control system.}
	\label{e_infty_final}
\end{figure}

Table \ref{tb: e_max_pcid_shapedpcid} presents a quantitative comparison of \(||e||_\infty/||r||_\infty\) for the PID, PCID, and shaped PCID systems at selected frequencies: 5 Hz, 10 Hz, 30 Hz, and 200 Hz. The choice of 5, 10, and 30 Hz validates the improved precision resulting from high-order harmonics reduction in the shaped PCID system within the targeted frequency range of \( (0, 30) \) Hz. Additionally, the inclusion of 200 Hz ensures that high-frequency precision has also been attained. Across all frequencies, the shaped PCID system consistently exhibits lower steady-state errors, highlighting its effectiveness. 
\begin{table}[htp]
\caption{The \(||e||_\infty/||r||_\infty\) values for the PCID and shaped PCID systems under sinusoidal inputs at frequencies of 5, 10, 30, and 200 Hz.}
\label{tb: e_max_pcid_shapedpcid}
\centering
\renewcommand{\arraystretch}{1.5} 
\fontsize{8pt}{8pt}\selectfont
\resizebox{0.8\columnwidth}{!}{
\begin{tabular}{|c|c|c|c|c|}
\hline
\multirow{2}{*}{ Systems} & \multicolumn{4}{c|}{Input Frequencies [Hz]}\\ \cline{2-5}
&5 & 10 & 30 & 200\\ \hline 
PID  & $ 1.5\times 10^{-2}$ &  $ 1.3\times 10^{-2}$ & $ 1.7\times 10^{-1}$ &  $ 1.28$\\ 
PCID  & $ 4.8\times 10^{-2}$ &  $ 1.9\times 10^{-2}$ & $ 1.6\times 10^{-1}$ &  $ 1.26$\\ 
Shaped PCID & $ 1.0\times 10^{-2}$ & $ 8.9\times 10^{-3}$ & $ 1.5\times 10^{-1}$ & $ 1.25$ \\
Precision Improvement &  79.17\% &  53.16\% &  6.25\% &  0.79\%\\ \hline 
\end{tabular}
}
\end{table}

Another observation from Fig. \ref{e_infty_final} is that the SIDF analysis provides more reliable predictions for the shaped PCID system compared to the PCID system in Fig. \ref{e_infty_sim_df_pcid}. Define the Relative Prediction Error (RPE) of the SIDF analysis as \(\text{RPE} = ||\mathcal{S}_{\text{sim}}(\omega)| - |\mathcal{S}_{\text{sidf}}(\omega)|| / |\mathcal{S}_{\text{sidf}}(\omega)|\), where \( |\mathcal{S}_{\text{sim}}(\omega)| \) and \( |\mathcal{S}_{\text{sidf}}(\omega)| \) are obtained from simulations and SIDF analysis \eqref{eq: df_s1}, respectively. A comparison of RPE values across six frequencies, shown in Table \ref{tb: rpe_pcid_shapedpcid}, supports this observation. The improved reliability of the SIDF analysis in the shaped PCID system is attributed to the high-order harmonics reduction. However, discrepancies between SIDF predictions and simulations remain, as the SIDF considers only the first-order harmonic. To address this, \(\beta_n(\omega)\to0\) can be restricted to maintain the two-reset condition, and HOSIDF methods in \cite{saikumar2021loop, ZHANG2024106063}, can be employed for higher accuracy.

\vspace{-0.2cm}
\begin{table}[htp]
\caption{Relative Prediction Error (RPE) of the SIDF analysis for PCID and shaped PCID control systems at frequencies of 1, 10, 50, 100, 500, and 1000 Hz.}
\label{tb: rpe_pcid_shapedpcid}
\centering
\renewcommand{\arraystretch}{1.5} 
\fontsize{8pt}{8pt}\selectfont
\resizebox{0.8\columnwidth}{!}{
\begin{tabular}{|c|c|c|c|c|c|c|}
\hline
\multirow{2}{*}{ Systems} & \multicolumn{6}{c|}{Input Frequencies [Hz]}\\ \cline{2-7}
&1 & 10 & 50 & 100 & 500& 1000\\ \hline 
PCID  &  15.51 & 0.97 &  0.03 & 0.02 & 2.35$\times 10^{-3}$& 3.21$\times 10^{-3}$\\ 
Shaped PCID & 0.03 &0.18 & 0.02 & 0.01& 8.39$\times 10^{-4}$ & 3.15$\times 10^{-3}$\\
\hline 
\end{tabular}
}
\end{table}
\vspace{-0.5cm}
\subsection{Experimental Results: Improved Steady-State Tracking Precision}
Figure \ref{single_r_5_10_50_200} illustrates the experimentally measured steady-state errors for the PID, PCID, and shaped PCID systems in response to a normalized sinusoidal input signal defined as \(r(t) = 1 \times 10^{-5} \sin(2\pi f t)\) [m], with frequencies \(f = 5, 10, 30, 200\) Hz. Note that during practical experiments, the magnitudes of input signals employed for these four frequencies were different; however, for the purposes of comparison, the magnitude of all input signals have been normalized to \(1 \times 10^{-5}\) [m].

Table \ref{tb: e_max_pcid_shapedpcid_exp} presents the maximum steady-state errors at the three test frequencies for the PID, PCID, and shaped PCID systems. The results demonstrate that the shaped PCID system achieves lowest position errors among the three systems, especially at low frequency range of (0,30) Hz. Notably, at 5 Hz, the shaped PCID system improves precision by 72.56\% compared to the PCID system. 
\begin{figure}[htp]
    \centering
    \centerline{\includegraphics[width=0.8\columnwidth]{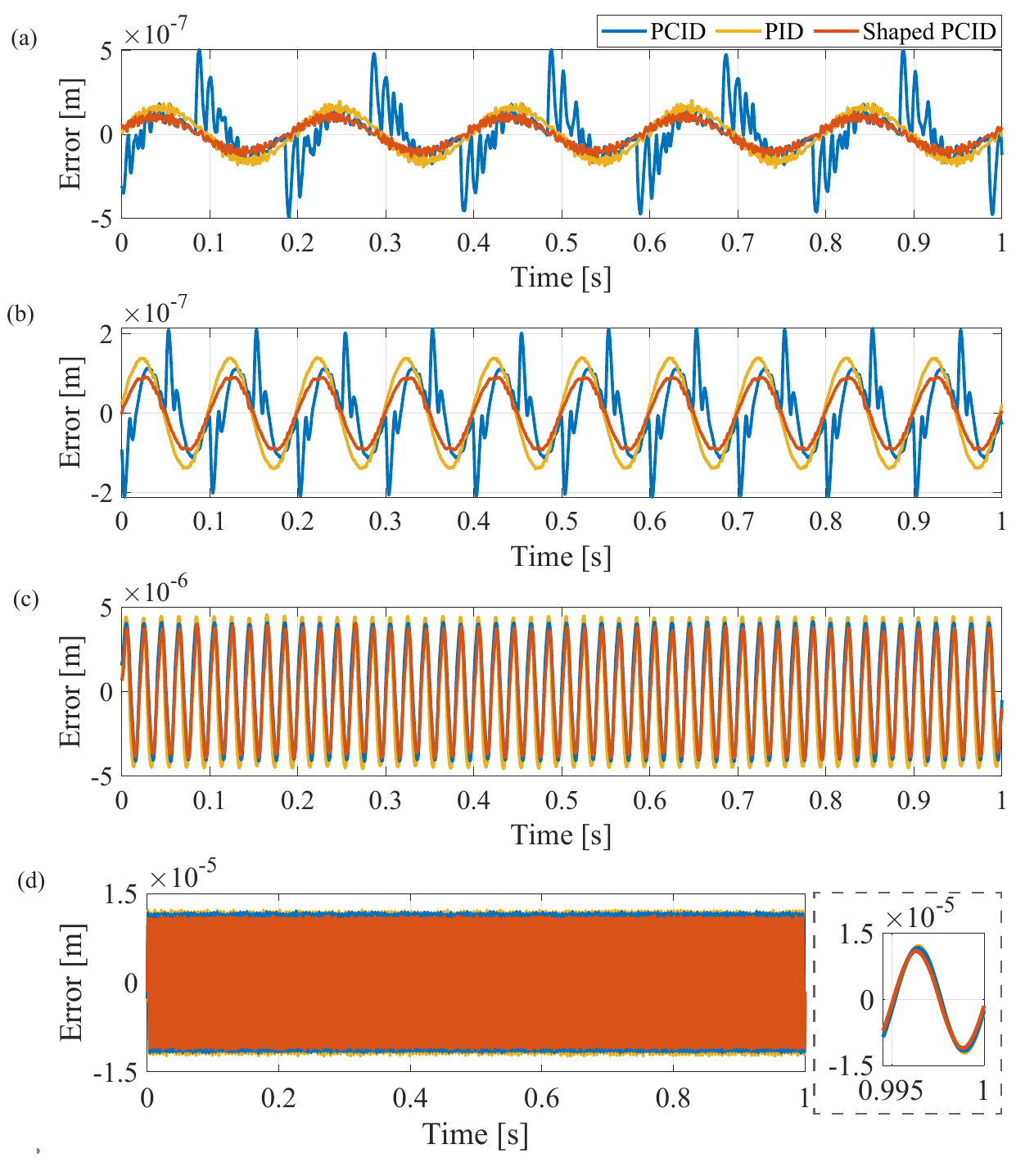}}
    \caption{Normalized experimental measured steady-state errors of the PID, PCID, and shaped PCID control systems under sinusoidal input \(r(t) = 1 \times 10^{-5} \sin(2\pi t)\) [m].}
	\label{single_r_5_10_50_200}
\end{figure}
\begin{table}[htp]
\caption{The maximum steady-state errors $||e||_\infty$ [m] in the PCID system and shaped PCID systems under single sinusoidal inputs at frequencies of 5, 10, 30, 200 Hz.}
\label{tb: e_max_pcid_shapedpcid_exp}
\centering
\renewcommand{\arraystretch}{1.5} 
\fontsize{8pt}{8pt}\selectfont
\resizebox{0.8\columnwidth}{!}{
\begin{tabular}{|c|c|c|c|c|}
\hline
\multirow{2}{*}{ Systems} & \multicolumn{4}{c|}{Input Frequency [Hz]}\\ \cline{2-5}
&5 & 10 & 30 & 200\\ \hline 
PID  & $ 2.03\times 10^{-7}$ &  $ 1.41\times 10^{-7}$ & $ 1.72\times 10^{-6}$&  $ 1.23\times 10^{-5}$\\ 
PCID  & $ 5.03\times 10^{-7}$ &  $2.16\times 10^{-7}$ & $ 1.65\times 10^{-6}$&  $ 1.21\times 10^{-5}$\\ 
Shaped PCID & $ 1.38\times 10^{-7}$ & $ 9.30\times 10^{-8}$ & $ 1.56\times 10^{-6}$ & $ 1.14\times 10^{-5}$ \\
Precision Improvement &  72.56\% &  56.94\% &  5.45\% &  7.00\%\\ \hline 
\end{tabular}
}
\end{table}


Real-world input signals are often more complex than a single sinusoid. In this subsection, the results of the single sinusoidal reference inputs serve to illustrate the steady-state performance of the three systems across varying frequencies. To comprehensively evaluate the positioning performance of the shaped reset control system, multiple inputs—including disturbances and noise—will be applied to the three systems in the next subsection.

\vspace{-0.2cm}
\subsection{Experimental Results: Improved Steady-State Tracking Precision and Disturbance and Noise Rejection}
This subsection presents the steady-state errors of three systems under multiple input conditions.

Figure \ref{dis_noise_result}(a) shows the measured steady-state errors of the three systems in response to a disturbance input signal defined as $d_1(t) = 1 \times 10^{-7}[149.3\sin(4\pi t) + 1.2\sin(16\pi t) + 11.9\sin(16\pi t) + 3.0\sin(40\pi t)] \text{ [m]}$. 

Next, a white noise input \(n(t)\) with a power bound of \(3 \times 10^{-12}\) [m] is added to the disturbance input \(d_1(t)\). The resulting steady-state errors for the three systems are presented in Fig. \ref{dis_noise_result}(b). Table \ref{tb: e_max_disturbance_noise} summarizes the maximum steady-state errors for the PID, PCID, and shaped PCID systems under these inputs. The results show that the shaped PCID system improves precision by \(80.07\%\) compared to the PCID system, effectively rejecting both the disturbance and noise.

\begin{figure}[htp]
    \centering
    \centerline{\includegraphics[width= 0.8\columnwidth]{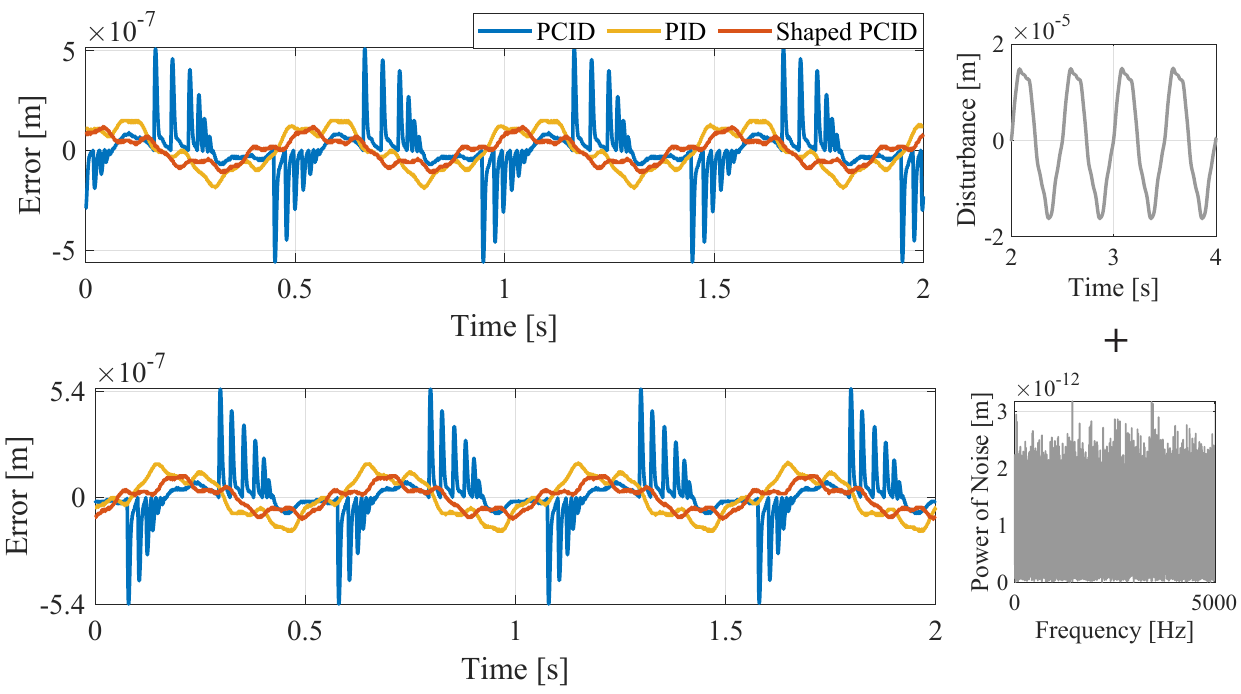}}
    \caption{Experimental measured steady-state errors of PID, PCID, and shaped PCID control systems under $d_1(t) +n(t)$.}
	\label{dis_noise_result}
\end{figure}

\vspace{-0.2cm}
\begin{table}[htp]
\caption{The maximum steady-state errors $||e||_\infty$ [m] in the PID, PCID, and shaped PCID control systems under different input signals.}
\label{tb: e_max_disturbance_noise}
\centering
\renewcommand{\arraystretch}{1.5} 
\fontsize{8pt}{8pt}\selectfont
\resizebox{0.8\columnwidth}{!}{
\begin{tabular}{|c|c|c|c|}
\hline
\multirow{2}{*}{ Systems} & \multicolumn{3}{c|}{Input Signals}\\ \cline{2-4}
  & $d_1(t)+n(t)$ & $r_2(t)+d_2(t)+n(t)$ & $r_3(t)+d_3(t)+n(t)$\\ \hline 
PID  & $ 1.80\times 10^{-7}$ &  $ 1.22\times 10^{-7}$ & $ 1.17\times 10^{-7}$\\ 
PCID  & $ 5.52\times 10^{-7}$ &  $3.63\times 10^{-7}$ & $ 2.00\times 10^{-7}$\\ 
Shaped PCID & $ 1.10\times 10^{-7}$ & $ 8.80\times 10^{-8}$ & $ 9.64\times 10^{-8}$ \\
Precision Improvement & 80.07\% &  75.78\% &  51.79\%\\ \hline 
\end{tabular}
}
\end{table}

To evaluate both the reference tracking, as well as the disturbance and noise rejection of the closed-loop shaped PCID control system, Figure \ref{r5_r10_dis_noise} compares the steady-state errors of the PID, PCID, and shaped PCID systems under multiple input signals. In Fig. \ref{r5_r10_dis_noise}(a), the inputs include a reference signal \( r_2(t) = 6 \times 10^{-6} \sin(10\pi t) \text{ [m]}, \) alongside the disturbance \( d_2(t) = 1\times 10^{-8}[49.0\sin(4\pi t) + 5.5\sin(16\pi t) + 1.1\sin(40\pi t)] \text{ [m]}\) and white noise \(n(t)\) with a power bound of \(3\times 10^{-12}\) [m]. In Fig. \ref{r5_r10_dis_noise}(b), the inputs consist of a reference signal \( r_3(t) = 6 \times 10^{-6} \sin(20\pi t) \text{ [m]},\) a disturbance \( d_3(t) = 1\times 10^{-7}[2.7\sin(10\pi t) + 3.7\sin(14\pi t) + 3.0\sin(30\pi t)] \text{ [m]},\) and the white noise noise \(n(t)\) with a power bound of \(3\times 10^{-12}\) [m]. The maximum steady-state errors for these two cases are summarized in Table \ref{tb: e_max_disturbance_noise}, indicating that the shaped PCID system improves precision by \(73.5\%\) and \(53.06\%\) in the two scenarios, respectively.

\begin{figure}[!t]
    \centering
    \centerline{\includegraphics[width= 0.8\columnwidth]{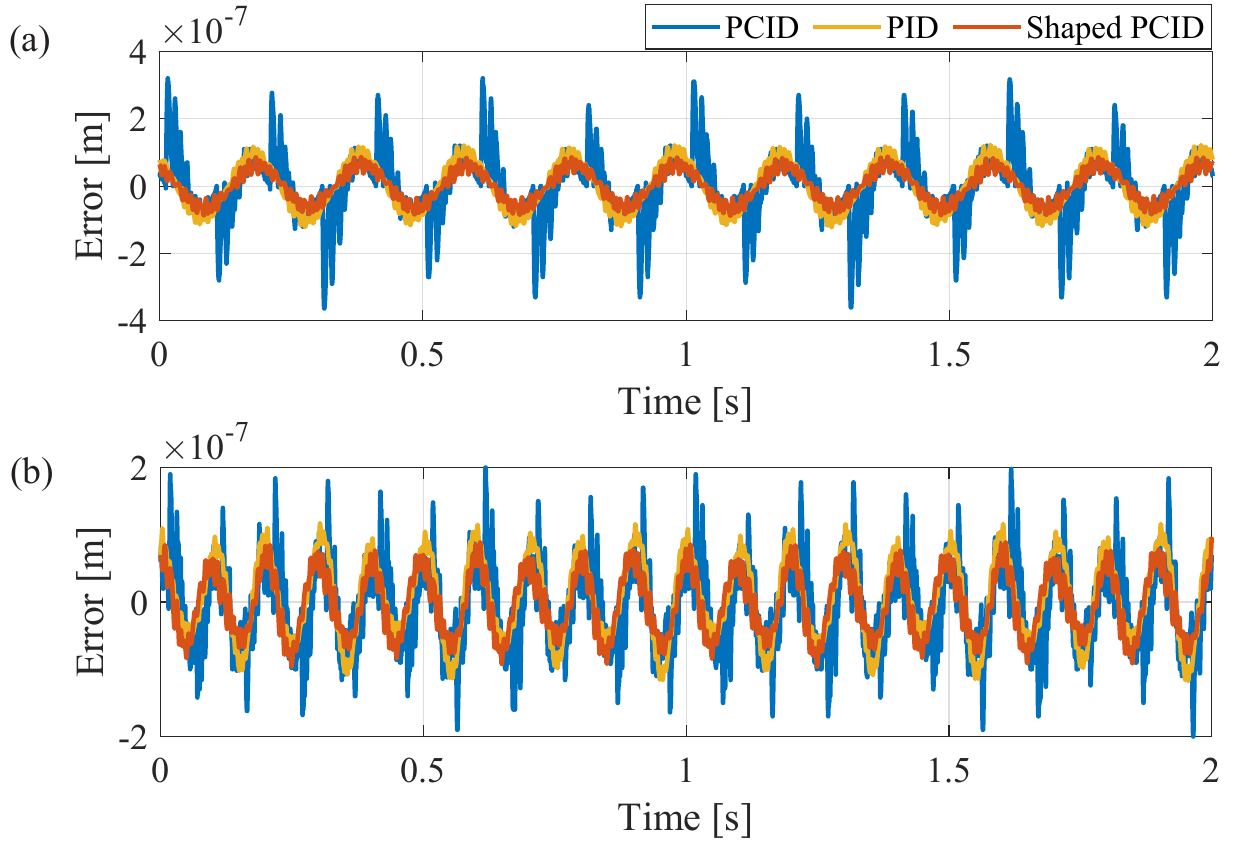}}
    \caption{Experimental measured steady-state errors of PID, PCID, and shaped PCID control systems under (a) $r_2(t) + d_2(t) +n(t)$ and (b) $r_3(t) + d_3(t) +n(t)$.}
	\label{r5_r10_dis_noise}
\end{figure}
\begin{figure}[!t]
    \centering
    \centerline{\includegraphics[width= 0.8\columnwidth]{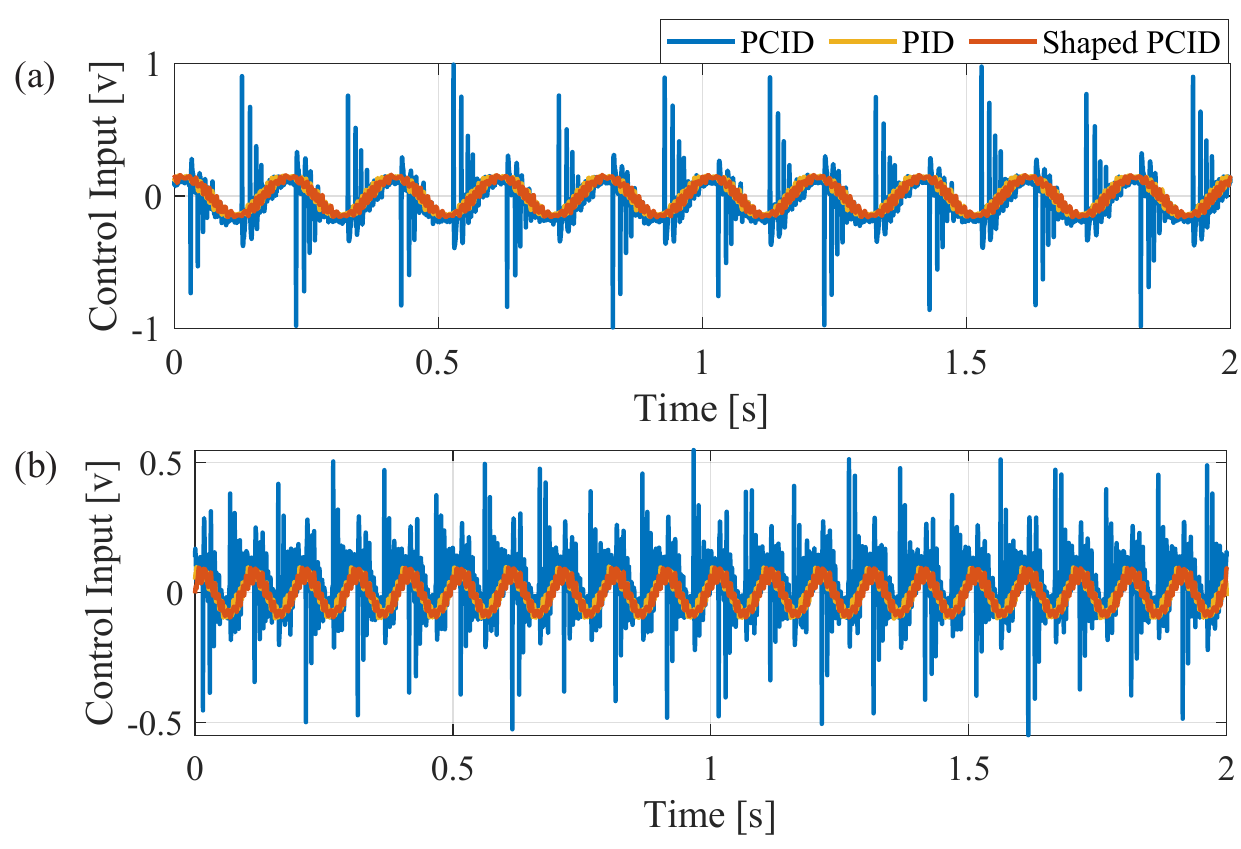}}
    \caption{Experimental measured control input of PID, PCID, and shaped PCID control systems under (a) $r_2(t) + d_2(t) +n(t)$ and (b) $r_3(t) + d_3(t) +n(t)$.}
	\label{r5_r10_dis_noise_u}
\end{figure}

Moreover, Figure \ref{r5_r10_dis_noise_u} illustrates the control inputs for these two cases, demonstrating that the shaped PCID system requires the least control input force while achieving the lowest steady-state error. Together, Figures \ref{r5_r10_dis_noise} and \ref{r5_r10_dis_noise_u} highlight the improved control efficiency of the shaped PCID system, which can be attributed to the reduction of high-order harmonics.

\vspace{-0.5cm}
\subsection{Experimental Results: Eliminated Limit Cycle}
The shaped reset system also eliminates the limit cycle issues in the step responses of reset PID systems. Current solutions for addressing the limit cycle problem include the \enquote{PI+CI} structure \cite{banos2007definition} and the PCI-PID structure in Fig. \ref{Structure_PCIID}. To provide a fair comparison of the effectiveness of five control structures—PID, PCID, shaped PCID, PI+CI D, and PCI-PID—we designed these systems with the same bandwidth of 100 Hz and phase margin of 50$\degree$ of the first-order harmonics for fair comparison. 

Figure \ref{step50_final}(a) presents the step responses of the five systems, highlighting the effectiveness of the shaped PCID, PI+CI D, and PCI-PID systems in mitigating the limit cycle issues observed in the PCID system. These systems also exhibit lower overshoot compared to the PID system. However, the PI+CI D and PCI-PID structures address limit cycle problems at the cost of reduced steady-state performance.

For example, as shown in Fig. \ref{step50_final}(b), under a sinusoidal input signal \( r(t) = 1 \times 10^{-5}\sin(20\pi t) \) [m], the steady-state errors of the PI+CI D and PCI-PID systems are larger than those of the shaped PCID system. This occurs because the PI+CI D and PCI-PID systems exhibit high-magnitude high-order harmonics at low frequencies, similar to the PCID system. In contrast, the shaped PCID system reduces these high-order harmonics, leading to improved steady-state performance.
\begin{figure}[htp]
    \centering
    \centerline{\includegraphics[width= 0.8\columnwidth]{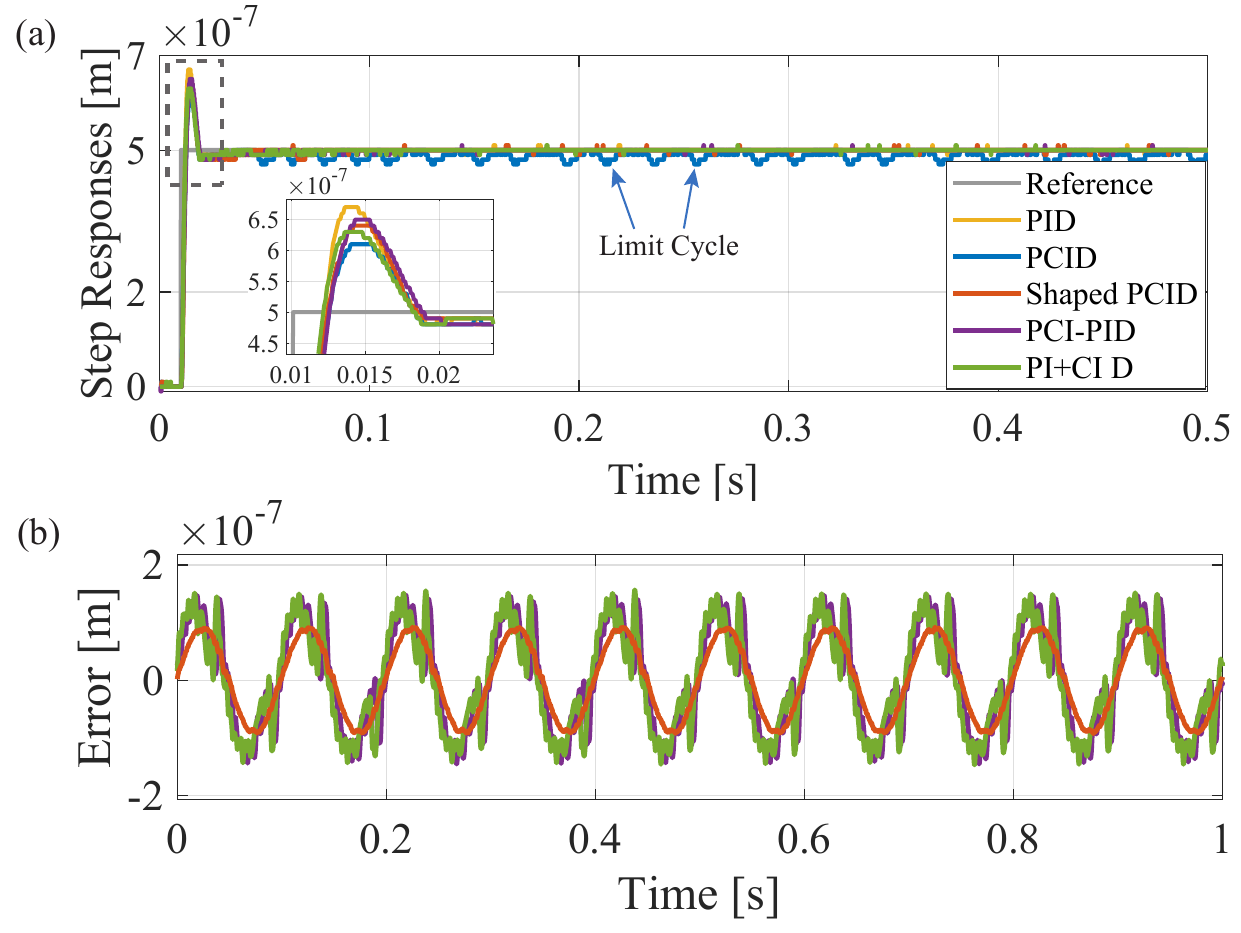}}
    \caption{Experimental measured (a) step responses of PID, PCID, shaped PCID, PCI-PID, PI+CI D control systems, and (b) steady-state errors of the PCI-PID, PI+CI D, and shaped PCID control systems under sinusoidal input \(r(t) = 1 \times 10^{-5} \sin(20\pi t)\) [m].}
	\label{step50_final}
\end{figure}


In summary, the proposed shaped PCID control system improves positioning accuracy and control efficiency compared to both PID and PCID control systems on the precision motion stage. Additionally, it effectively eliminates limit cycles, leading to enhanced overall system performance.



\section{Conclusion}
\label{sec: conclusion}
In conclusion, this paper makes two main contributions. First, it introduces a method for identifying multiple-reset and two-reset regions in sinusoidal-input closed-loop reset systems, providing engineers with a practical tool to evaluate the reliability of Sinusoidal Input Describing Function (SIDF) analysis. The effectiveness and time-saving advantages of this method have been validated through simulations and experimental results across six case studies.

Second, the study introduces a shaped reset control strategy to reduce high-order harmonics. As an illustrative example, the procedure for designing a PID shaping filter in CI-based reset systems is presented. The resulting PID-shaped reset control system reduces high-order harmonics while preserving the benefits of the first-order harmonic compared to the reset control system. Experimental results from precision motion stages highlight three key benefits of the PID-shaped reset system: (1) Improved SIDF analysis accuracy; (2) Enhanced tracking precision, disturbance and noise rejection, and overall control efficiency; and (3) Elimination of limit-cycle issues in the step responses of reset systems.

Future research could explore the application of the shaped reset control system design in Section \ref{sec: New Shaped Reset Systems} to other reset control structures, aiming to investigate further improvements in system performance.

\bibliographystyle{unsrt}
\bibliography{Ref}
\section*{Appendix}
\subsection{The proof of Lemma \ref{lem: piece-wise}}
\label{appendix: proof for Lemma 1}
\begin{proof}
Consider a closed-loop reset control system in Fig. \ref{fig: RCS_d_n_r_n_n} under a sinusoidal reference input \( r(t) = |R| \sin (\omega t) \), and satisfying Assumption \ref{assum: stable}. 

Within each steady-state period \((0, 2\pi/\omega]\), the reset instant \( t_i \) is defined as the time at which the reset-triggered signal \( z_s(t_i) \) reaches zero. Let \( x_i(t) \), \( m_i(t) \), \( z_i(t) \), and \( z_s^i(t) \) represent the state of the reset controller \(\mathcal{C}_r\), the reset output, the reset input, and the reset-triggered signal, during the intervals \((t_{i-1}, t_i]\), where \( i \in \mathbb{Z}^+ \), respectively. This proof presents the piecewise expressions for the steady-state trajectories of the system, following the three steps outlined below.

\textbf{Step 1: Derive the Piecewise Expression for \( x_i(t) \).}


From \eqref{eq: State-Space eq of RC}, the system operates without any reset actions during the time interval \((t_{i-1}, t_{i}]\). At the reset instant \( t_{i} \in J \), the state \( x_{i}(t_{i}) \) undergoes a reset (or jump) to a new state \( x_{i}(t_{i}^+) \), given by
\begin{equation}
\label{x_r(t_i+)}
    x_i(t_{i}^+) = A_\rho x_i(t_{i}).
\end{equation}
The jump in \eqref{x_r(t_i+)} introduces a step input signal \(h_{i}(t)\) into the system, impacting the trajectories during the subsequent time interval \((t_{i}, t_{i+1}]\) \cite{ZHANG2024106063}. The signal $h_{i}(t)$ is given by 
    \begin{equation}
    \label{eq: h_i}
        h_{i}(t) =[x_i(t_{i}^+) - x_i(t_{i})]h (t-t_i) = (A_\rho-I)x_i(t_{i})h(t-t_{i}),
    \end{equation}
where $h(t)$ is a unit step signal given by
\begin{equation}
\label{eq: ht1}
    h(t)={\begin{cases}1,\ &t>0\\0,&t\leq 0\end{cases}},
\end{equation}
with the Fourier transform $ H(\omega) = \mathscr{F}[h(t)] ={(j\omega)}^{-1}$. 

Based on \eqref{eq: State-Space eq of RC} and \eqref{eq: h_i}, the block diagram of the controller \(\mathcal{C}\) for the time interval \((t_i, t_{i+1}]\) is illustrated in Fig. \ref{fig1:xnl_to_x_tf}.
\vspace{-0.25cm}
\begin{figure}[h]
	\centerline{\includegraphics[width=0.35\textwidth]{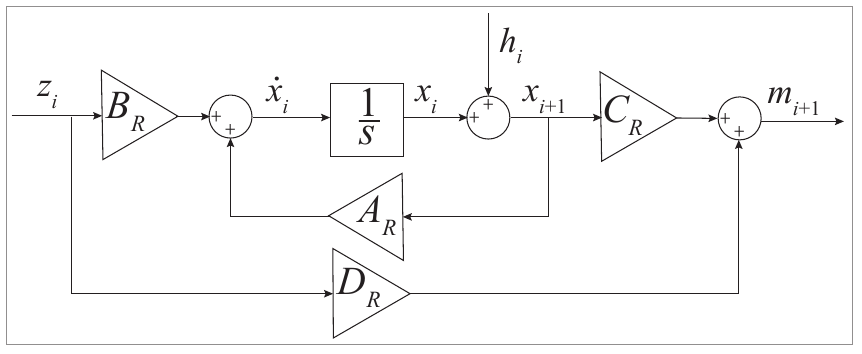}}
	\caption{State-space block diagram of $\mathcal{C}_r$ during the time interval $(t_{i},\ t_{i+1}]$.}
	\label{fig1:xnl_to_x_tf}
\end{figure}

\vspace{-0.25cm}
From Fig. \ref{fig1:xnl_to_x_tf}, the signal \( x_{i+1}(t) \) is derived from two inputs: \( z_i(t) \) and \( h_i(t) \). The respective contributions to \( x_{i+1}(t) \) from \( z_i(t) \) and \( h_i(t) \) are labeled as \( x_{i+1}^z(t) \) and \( x_{i+1}^h(t) \), respectively. 

Under Assumption \ref{assum: stable}, the system is guaranteed to be convergent, ensuring that the trajectories of the system are integrable, as discussed in \cite{ZHANG2024106063}. This integrability implies that the system’s time-domain signals have well-defined Fourier transforms. Let \( Z_i(\omega) \), \( H_i(\omega) \), \( X_i(\omega) \), \( X_{i+1}^z(\omega) \), and \( X_{i+1}^h(\omega) \) represent the Fourier transforms of the signals \( z_i(t) \), \( h_i(t) \), \( x_i(t) \), \( x_{i+1}^z(t) \), and \( x_{i+1}^h(t) \), respectively. 

Since no reset actions occur during the time interval \((t_i, t_{i+1}]\), the superposition law holds. Therefore, $X_{i+1}(\omega)$ is express as
\begin{equation}
\label{Xi+1(w)}
\begin{aligned}
&X_{i+1}(\omega) = X_{i+1}^z(\omega) + X_{i+1}^h(\omega)\\
&= {X_{i+1}^z(\omega)}{Z_{i}(\omega)}^{-1} \cdot Z_{i}(\omega) + {X_{i+1}^h(\omega)}{H_{i}(\omega)}^{-1} \cdot H_{i}(\omega).
\end{aligned}
\end{equation}
Based on Figs. \ref{fig: RCS_d_n_r_n_n} and \ref{fig1:xnl_to_x_tf}, within the closed-loop reset system, when \(h_i(t) = 0\), we have
\begin{equation}
\label{eq: X_i+1_e_trans}
    \begin{aligned}
       {X_{i+1}^z(\omega)}{Z_{i}(\omega)}^{-1} &= (j\omega I-A_R)^{-1}B_R,
    \end{aligned}
\end{equation}
and when \(z_i(t) = 0\), it follows that
\begin{equation}
\label{eq: X_i+1_h_trans}
{X_{i+1}^h(\omega)}{H_{i}(\omega)}^{-1} = \mathcal{S}_{bl}(\omega)(j\omega I-A_R)^{-1}j\omega,
\end{equation}
By combining \eqref{Xi+1(w)}, \eqref{eq: X_i+1_e_trans}, and \eqref{eq: X_i+1_h_trans}, we derive
\begin{equation}
\label{eq:X_i+12}
\begin{aligned}
X_{i+1}(\omega) = (j\omega I-A_R&)^{-1} B_RZ_{i}(\omega) +\mathcal{S}_{bl}(\omega) \cdot\\
&(j\omega I-A_R)^{-1}j\omega H_{i}(\omega).  
\end{aligned}  
\end{equation}
According to \eqref{eq: State-Space eq of RC}, during the reset interval \((t_i, t_{i+1}]\), we obtain \(X_{i}(\omega) = (j\omega I-A_R)^{-1}B_RZ_{i}(\omega)\). Substituting this $X_{i}(\omega)$ into \eqref{eq:X_i+12} yields
\begin{equation}
\label{eq:X_i+13}
X_{i+1}(\omega) = X_{i}(\omega) + \mathcal{S}_{bl}(\omega) (j\omega I-A_R)^{-1}j\omega H_{i}(\omega). 
\end{equation}
From \eqref{eq: h_i}, the Fourier transform of \(h_i(t)\) is given by 
\begin{equation}
\label{eq: Hi_def}
   H_i(\omega) = \mathscr{F}[h_i(t)] = (A_\rho-I)(j\omega)^{-1} e^{-j\omega t_i}x_i(t_{i}).
\end{equation}
Substituting \eqref{eq: Hi_def} into \eqref{eq:X_i+13}, we obtain
\begin{equation}
\label{eq:X_i+14}
\begin{aligned}
X_{i+1}(\omega) = X_{i}(\omega) + \mathcal{T}_{s}(\omega) e^{-j\omega t_i}x_i(t_{i}),
\end{aligned}
\end{equation}
where
\begin{equation}
\mathcal{T}_{s}(\omega) = \mathcal{S}_{bl}(\omega) (j\omega I-A_R)^{-1} (A_\rho-I).
\end{equation}
Conducting the Fourier transforms of equation \eqref{eq:X_i+14}, we obtain:
\begin{equation}
\label{e_i+1}
 x_{i+1}(t) = x_{i}(t) + h_{s}(t-t_i)x_i(t_i),\text{ where }h_{s}(t) = \mathscr{F}^{-1}[\mathcal{T}_{s}(\omega)].
\end{equation}
Till here, the state of the reset controller during the time interval $(t_i,t_{i+1}]$ denoted as $x_{i+1}(t)$ is derived.

\textbf{Step 2: Derive the Piecewise Expression for \( z_i(t) \).}

Similarly to Step 1, from Fig. \ref{fig1:xnl_to_x_tf}, the signal \( m_{i+1}(t) \) is derived from two inputs: \( z_i(t) \) and \( h_i(t) \). The contributions to the output \( m_{i+1}(t) \) from \( z_i(t) \) and \( h_i(t) \) are denoted as \( m_{i+1}^z(t) \) and \( m_{i+1}^h(t) \), respectively. Let \( M_i(\omega) \), \( M_{i+1}^z(\omega) \), and \( M_{i+1}^h(\omega) \) represent the Fourier transforms of the signals \( m_i(t) \), \( m_{i+1}^z(t) \), and \( m_{i+1}^h(t) \), respectively. Using the same calculation process as in Step 1, \( M_{i+1}(\omega) \) is expressed as:
\begin{equation}
\label{Vi+1(w), Xi+1(w)}
\begin{aligned}
\resizebox{1\columnwidth}{!}{$M_{i+1}(\omega) 
= {M_{i+1}^z(\omega)}{Z_{i}(\omega)}^{-1} \cdot Z_{i}(\omega) +{M_{i+1}^h(\omega)}{H_{i}(\omega)}^{-1}\cdot H_{i}(\omega),$}
\end{aligned}
\end{equation}
where
\begin{equation}
\label{eq: Mi+1_z,M_i+1_h}
    \begin{aligned}
       {M_{i+1}^z(\omega)}{Z_{i}(\omega)}^{-1} &=C_R(j\omega I-A_R)^{-1}B_R+D_R= \mathcal{C}_{l}(\omega),\\
       {M_{i+1}^h(\omega)}{H_{i}(\omega)}^{-1} &= \mathcal{S}_{bl}(\omega)C_R(j\omega I-A_R)^{-1}j\omega.       
    \end{aligned}
\end{equation}
Substituting \eqref{eq: Mi+1_z,M_i+1_h} into \eqref{Vi+1(w), Xi+1(w)}, $M_{i+1}(\omega)$ is simplified to
\begin{equation}
\label{eq: Mi+1_1}
\begin{aligned}
\resizebox{1\columnwidth}{!}{$M_{i+1}(\omega) = \mathcal{C}_{l}(\omega)Z_{i}(\omega) + \mathcal{S}_{bl}(\omega)C_R(j\omega I-A_R)^{-1}j\omega H_{i}(\omega).$}
\end{aligned}
\end{equation}
From \eqref{eq: State-Space eq of RC}, during the reset interval \((t_i, t_{i+1}]\), we  have
\begin{equation}
\label{eq:Mi0}
M_{i}(\omega) = \mathcal{C}_{l}(\omega)Z_{i}(\omega).    
\end{equation}
Substituting \eqref{eq: Hi_def} and \eqref{eq:Mi0} into \eqref{eq: Mi+1_1}, $M_{i+1}(\omega)$ is given by
\begin{equation}
\label{eq: Mi+1_2}
\begin{aligned}
M_{i+1}(\omega) = M_{i}(\omega) + C_R \mathcal{T}_{s}(\omega)e^{-j\omega t_i}x_i(t_{i}).
\end{aligned}
\end{equation}
From Fig. \ref{fig: RCS_d_n_r_n_n}, in the closed-loop reset system, the following relation holds:
\begin{equation}
\label{eq: V(w),E(w)}
\begin{aligned}
Z_{i}(\omega)& = R(\omega) - (A(\omega)+M_{i}(\omega))\mathcal{C}_\sigma(\omega),\\
Z_{i+1}(\omega) &= R(\omega) - (A(\omega)+M_{i+1}(\omega))\mathcal{C}_\sigma(\omega),
\end{aligned}   
\end{equation}
where $A(\omega) = \mathscr{F}[a(t)] $ and
\begin{equation}
\mathcal{C}_\sigma(\omega) =\mathcal{C}_{3}(\omega)\mathcal{P}(\omega)\mathcal{C}_{4}(\omega)\mathcal{C}_{1}(\omega).
\end{equation}
From \eqref{eq: V(w),E(w)}, the following equations are derived
\begin{equation}
\label{eq: V(w),E(w)2}
\begin{aligned}
M_{i}(\omega)&= [{R(\omega)-Z_{i}(\omega)}]\cdot{\mathcal{C}_\sigma(\omega)}^{-1}-A(\omega),\\
M_{i+1}(\omega)&= [{R(\omega)-Z_{i+1}(\omega)}]\cdot{\mathcal{C}_\sigma(\omega)}^{-1}-A(\omega).
\end{aligned}   
\end{equation}
Substituting $M_{i+1}(\omega)$ and $M_{i}(\omega)$ from \eqref{eq: V(w),E(w)2} into \eqref{eq: Mi+1_2}, $Z_{i+1}(\omega)$ is obtained as
\begin{equation}
\label{E_i+1(w)}
Z_{i+1}(\omega) = Z_{i}(\omega) - \mathcal{T}_{\alpha}(\omega)e^{-j\omega t_i}x_i(t_{i}),
\end{equation}
where 
\begin{equation}
\label{eq: T_alpha_def}
\begin{aligned}
\mathcal{T}_{\alpha}(\omega) &= \mathcal{C}_{\sigma}(\omega)C_R\mathcal{T}_{s}(\omega).
\end{aligned}
\end{equation}
Conducting the Fourier transforms of equation \eqref{E_i+1(w)}, we obtain:
\begin{equation}
\label{e_i+1}
 z_{i+1}(t) = z_{i}(t) - h_{\alpha}(t-t_i)x_i(t_i),\text{ where }h_{\alpha}(t) = \mathscr{F}^{-1}[\mathcal{T}_{\alpha}(\omega)].
\end{equation}
Till here, the input of the reset controller during the time interval $(t_i,t_{i+1}]$ denoted as $z_{i+1}(t)$ is derived.

\textbf{Step 3: Derive the Piecewise Expression for \( z_s^i(t) \).}

%

During the reset intervals \((t_{i-1}, t_{i}]\) and \((t_i, t_{i+1}]\), no reset action takes place. Let \( Z_s^i(\omega) \) denotes the Fourier transforms of \( z_s^i(t) \). Therefore, the following relationship holds:
\begin{equation}
\label{eq:z_zi,z_si+1}
\begin{aligned}
Z_s^{i}(\omega) = \mathcal{C}_s(\omega)Z_{i}(\omega),\text{ and } Z_s^{i+1}(\omega) = \mathcal{C}_s(\omega)Z_{i+1}(\omega). 
\end{aligned} 
\end{equation}
Substituting \eqref{eq:z_zi,z_si+1} into \eqref{E_i+1(w)}, we obtain
\begin{equation}
\label{eq:Z_si+1}
Z_s^{i+1}(\omega) = Z_s^{i}(\omega) - \mathcal{C}_s(\omega)\mathcal{T}_{\alpha}(\omega)e^{-j\omega t_i}x_i(t_{i}).
\end{equation}
Conducting the Fourier transforms of equation \eqref{eq:Z_si+1}, we obtain:
\begin{equation}
\label{e_i+1}
 z_s^{i+1}(t) = z_s^{i}(t) - h_{\beta}(t-t_i)x_i(t_i),
\end{equation}
where
\begin{equation}
\label{h_beta(t)}
     h_{\beta}(t) = \mathscr{F}^{-1}[\mathcal{C}_s(\omega)\mathcal{T}_{\alpha}(\omega)].
\end{equation}
Till here, the expression of the reset triggered signal during the time interval $(t_i,t_{i+1}]$ denoted as $z_s^{i+1}(t)$ is derived. We conclude the proof.
\end{proof}

\vspace{-0.5cm}
\subsection{The proof of Theorem \ref{thm: Delta}}
\label{appendix: proof for Theorem 1}
\begin{proof}
Consider the reset control system shown in Fig. \ref{fig: RCS_d_n_r_n_n} with a sinusoidal reference input \( r(t) = |R|\sin(\omega t) \) and satisfies Assumptions \ref{assum: stable} and \ref{assum: t1}. This proof derives the multiple-reset conditions in the sinusoidal-input reset system. It is organized into four steps as follows.

\textbf{Step 1: Derive the First Reset Instant \( t_1 \) Within One Steady-State Cycle.}

Under Assumption \ref{assum: t1}, the state and reset-triggered signal of the reset system during the interval \((0, t_1]\), denoted as \(x_1(t)\) and \(z_s^1(t)\), are equivalent to those of its BLS, denoted as \(x_{bl}(t)\) and \(z_{bl}(t)\), respectively, as expressed by:
\begin{equation}
\label{eq: x1,e1,es}
\begin{aligned}
    x_1(t) &= x_{bl}(t) = |R \Theta_{bl}(\omega)|\sin(\omega t + \angle \Theta_{bl}(\omega)), \\  
    z_s^1(t)  & = z_{bl}(t) = |R\mathcal{S}_{ls}(\omega)|\sin(\omega t +\angle \mathcal{S}_{ls}(\omega) ),  
\end{aligned}
\end{equation}
where $\angle \Theta_{bl}(\omega) \in (-\pi, \ \pi]$ and $\angle \mathcal{S}_{ls}(\omega) \in (-\pi, \ \pi]$. Functions $\Theta_{bl}(\omega)$ and $\mathcal{S}_{ls}(\omega)$ are given in \eqref{eq: ht}.

From Assumption \ref{assum: t1} and \eqref{eq: x1,e1,es}, the first reset instant denoted as $t_1$ within one steady-state cycle, which corresponds to the first zero-crossing point of the reset-triggered signal \( z_s^1(t)\), is expressed as:
	\begin{equation}
		\label{eq: t1}
		t_{1} = 
  \begin{cases}
     {(\pi-\angle\mathcal{S}_{ls}(\omega))}/{\omega}, & \text{ if }\angle \mathcal{S}_{ls}(\omega) \in (0, \ \pi],\\
     {(-\angle\mathcal{S}_{ls}(\omega))}/{\omega}, & \text{ if }\angle \mathcal{S}_{ls}(\omega) \in (-\pi, \ 0].    
  \end{cases}
\end{equation}
From \eqref{eq: t1}, we have \( t_1 \leq \pi/\omega \).

\textbf{Step 2: Draw the Conclusion that Reset Instants Occurring \(\pi/\omega\)-Periodically.}

Under Assumption \ref{assum: stable}, the reset-triggered signal \( z_s(t) \) in the sinusoidal-input reset system can be represented as an infinite series of harmonics \cite{pavlov2006uniform}, denoted by \( z_{sn}(t) \), and is given by
\begin{equation}
\label{eq:x_r(t)}
\begin{aligned}
z_s(t) &= \sum\nolimits_{n=1}^{\infty}z_{sn}(t) = \sum\nolimits_{n=1}^{\infty} |Z_{sn}|\sin(n\omega t + \angle Z_{sn}),
\end{aligned}
\end{equation}
where \( |Z_{sn}| \) denotes the magnitude and \( \angle Z_{sn} \) represents the phase of each harmonic component \( z_{sn}(t) \).

From \eqref{eq:x_r(t)}, we obtain
\begin{equation}
\label{eq:x_r(t)+-pi2}
 z_s(t_i) = - z_s(t_i\pm\pi/\omega) = 0. 
\end{equation}
From \eqref{eq:x_r(t)+-pi2}, the reset instant \( t_i \), where \( z_s(t_i) = 0 \), occurs periodically with a period of \( \pi/\omega \).

\textbf{Step 3: Establish the Condition for Multiple-Reset Systems: The Reset Triggered Signal \( z_s^2(t) \) Must Exhibit at Least One Zero-Crossing Within the Interval \( (t_1, \pi/\omega) \).}

From \eqref{eq:x_r(t)+-pi2}, within a steady-state period \((0, 2\pi/\omega]\), we obtain two conclusions:
\begin{enumerate}
    \item At the time instant $t_1$ and $t_1+\pi/\omega$, we have $z_s(t_1) = z_s(t_1 + \pi/\omega) = 0$.
\item Since \( t_1 \) represents the first reset instant within a steady-state cycle \((0, 2\pi/\omega]\), there is no zero-crossings of \( z_s(t) \) in the both the time intervals \((0, t_1)\) and \((\pi/\omega, t_1 + \pi/\omega)\).
\end{enumerate}

From these two conclusions, Fig. \ref{multiple_reset_condition} shows the green area that have no reset actions within a steady-state period \((0, 2\pi/\omega]\).
\vspace{-0.35cm}
\begin{figure}[htp]
    \centering
    \centerline{\includegraphics[width=0.75\columnwidth]{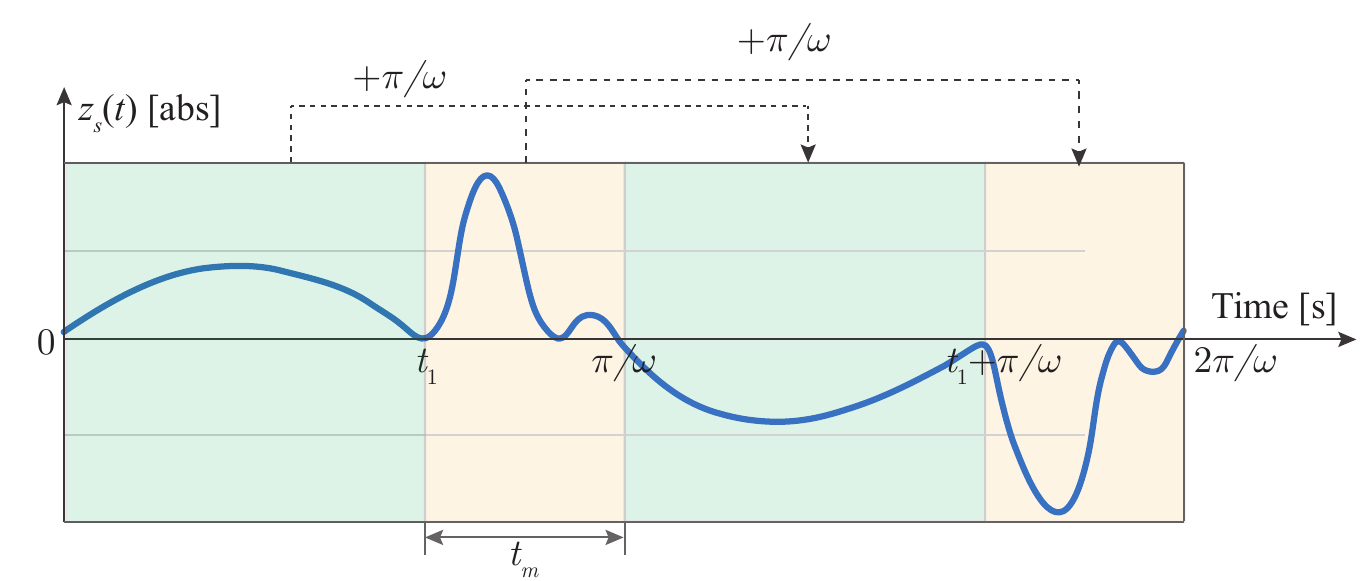}}
    \caption{Plot of the steady-state reset-triggered signal $z_s(t)$, with regions of same color indicating the same number of zero-crossings and opposite-sign trajectories. Green areas indicate no zero-crossings, while yellow areas show regions with zero-crossings.}
	\label{multiple_reset_condition}
\end{figure}
\vspace{-0.25cm}

A system is classified as a multiple-reset system if it exhibits more than two zero-crossings per \( 2\pi/\omega \) steady-state cycle in response to a sinusoidal reference input \( r(t) = |R|\sin(\omega t) \). If the reset-triggered signal \( z_s(t) \) has no zero-crossings within the interval \( (t_1, \pi/\omega) \), it will lack zero-crossings within \( (t_1 + \pi/\omega, 2\pi/\omega) \). This results in exactly two zero-crossings at \( t_1 \) and \( t_1 + \pi/\omega \) over one steady-state cycle, \((0, 2\pi/\omega] \). 

Therefore, a system exhibits multiple-reset behavior if the reset-triggered signal \( e_s(t) \) has at least one zero-crossing within \( (t_1, \pi/\omega) \), as illustrated in the yellow-shaded area of Fig. \ref{multiple_reset_condition}.

From Lemma \ref{lem: piece-wise}, \( z_s(t) \) can be broken down into piecewise components \( z_s^i(t) \) over intervals \( (t_i, t_{i+1}] \). Thus, for \( z_s(t) \) to have at least one zero-crossing within \( (t_1, \pi/\omega) \), the second piece \( z_s^2(t) \) must include at least one zero-crossing within the interval \( t \in (t_1, \pi/\omega) \).

\textbf{Step 4: Formulate the Multiple-Reset Condition.}

From \eqref{eq: xi, zi, zsi}, the reset-triggered signal \(z_s^2(t) \) during the time interval \((t_1, t_2]\) can be expressed as:
\begin{equation}
\label{es_2_true}  
z_s^2(t) = z_s^1(t) - h_{\beta}(t-t_1)x_1(t_1), \text{ for } t\in(t_1, t_2].
\end{equation}
From \eqref{eq: x1,e1,es} and \eqref{eq: t1}, $x_{1}(t_1)$ is given by
\begin{equation}
\label{eq: xbl_t1}
\begin{aligned}
x_{1}(t_1) &= |R \Theta_{bl}(\omega)|\sin(\omega t_1 + \angle \Theta_{bl}(\omega))\\
&=
\begin{cases}
|R|\cdot \Theta_s(\omega) , & \text{ if }\angle \mathcal{S}_{ls}(\omega) \in (0, \ \pi],\\
-|R|\cdot \Theta_s(\omega), & \text{ if }\angle \mathcal{S}_{ls}(\omega) \in (-\pi, \ 0],
\end{cases}
\end{aligned}
\end{equation}
where
\begin{equation}
\Theta_s(\omega) = |\Theta_{bl}(\omega)|\sin(\angle \mathcal{S}_{ls}(\omega) - \angle \Theta_{bl}(\omega)).
\end{equation}
By defining \( t = t_\delta + t_1 \) and substituting \( x_{1}(t_1) \) from \eqref{eq: xbl_t1} into \eqref{es_2_true}, along with $z_s^1(t)$ defined from \eqref{eq: x1,e1,es}, we obtain:
\begin{equation}
\label{eq: es2=es1+er2_2}
\begin{aligned}
z_s^2(t_\delta+t_{1})  = 
\begin{cases}
 -|R|\Delta(t_\delta) , & \text{ if }\angle \mathcal{S}_{ls}(\omega) \in (0, \ \pi],\\   
|R|\Delta(t_\delta), & \text{ if }\angle \mathcal{S}_{ls}(\omega) \in (-\pi, \ 0], 
\end{cases}
\end{aligned}
\end{equation}
where 
\begin{equation}    
\Delta(t_\delta) = |\mathcal{S}_{ls}(\omega)|\sin(\omega t_\delta)+ h_{\beta}(t_\delta) \Theta_s(\omega ).
\end{equation}
The multiple-reset condition requires that \( z_s^2(t) \) has at least one zero-crossing within the time interval \( (t_1, \pi/\omega) \). Using the relation \( t = t_\delta + t_1 \) and from \eqref{eq: es2=es1+er2_2}, this condition is transformed to: there exists a time interval \( t_\delta \in (0, \pi/\omega - t_1) \) such that \( z_s^2(t_\delta + t_1) \) has at least one zero-crossing.

From \eqref{eq: t1}, the value of $\pi/\omega-t_1$ is given by 
\begin{equation}
		\label{eq: pi/w-t1}
\pi/\omega-t_1 = 
  \begin{cases}
     {(\angle\mathcal{S}_{ls}(\omega))}/{\omega}, & \text{ if }\angle \mathcal{S}_{ls}(\omega) \in (0, \ \pi],\\
     {(\pi+\angle\mathcal{S}_{ls}(\omega))}/{\omega}, & \text{ if }\angle \mathcal{S}_{ls}(\omega) \in (-\pi, \ 0].    
  \end{cases}
\end{equation}
From \eqref{eq: pi/w-t1}, $\pi/\omega-t_1$ can be expressed as
\begin{equation}
\label{eq: tm}
    t_m = \pi/\omega-t_1 = \angle\mathcal{S}_{ls}(\omega)/{\omega} + \pi/\omega \cdot \text{sign}(\mathcal{S}_{ls}(\omega),
\end{equation}
where
\begin{equation}
\text{sign}(x) = \begin{cases} 
      0, & \text{if } x > 0, \\
      1, & \text{if } x \leq 0.
   \end{cases}
\end{equation}
Since a zero-crossing is independent of amplitude, 
the multiple-reset condition is simplified to verifying the existence of a time interval \( t_\delta \in (0, t_m) \) such that \( \Delta(t_\delta) = 0 \). This completes the proof.

\end{proof}

\vspace{-1cm}
\subsection{The proof of Lemma \ref{lem: stair_step}}
\begin{proof}
\label{appendix: proof for Lemma 2}
Consider a closed-loop reset control system as shown in Fig. \ref{fig: RCS_d_n_r_n_n}, with a sinusoidal reference input \( r(t) = |R| \sin (\omega t) \) and satisfying Assumptions \ref{assum: stable} and \ref{assum: t1}. This proof demonstrates that the steady-state reset-triggered signal \( z_s(t) \) is composed of a base-linear component \( z_{bl}(t) \) and a nonlinear component \( z_{nl}(t) \), where \( z_{nl}(t) \) is obtained by filtering a stair-step signal $d_s(t)$ through an LTI transfer function. The proof is organized into three steps.

\textbf{Step 1: Prove that Reset Actions Introduce Square Waves into Systems.}

The state \( x_c(t) \) of the reset controller \( \mathcal{C}_r \) is nonlinear and can be represented as the sum of its harmonics \cite{pavlov2006uniform}, expressed as
\begin{equation}
\label{eq:xc(t)_def}
    x_c(t) = \sum\nolimits_{n=1}^{\infty} x_{cn}(t) = |X_{cn}| \sin(n\omega t + \angle X_{cn}),
\end{equation}
where $|X_{cn}|$ and $\angle X_{cn}$ represent the magnitude and phase of each harmonic $x_{cn}(t)$ in $ x_c(t)$.

From \eqref{eq:xc(t)_def}, the following relation holds
\begin{equation}
\label{eq:xc(ti)}
x_c(t_i) = -x_c(t_i\pm\pi/\omega).  
\end{equation}


Based on \eqref{eq:x_r(t)+-pi2}, the reset instant \( t_i \) occurs \( \pi/\omega \)-periodically. At each reset instant \( t_i \), according to \eqref{eq: h_i}, the state \( x_c(t_i) \) undergoes a jump to \( A_\rho x_c(t_i) \), generating a step input defined by \( h_{i}(t) = (A_\rho - I)x_c(t_{i}) \, h(t - t_{i}) \). 

Then, based on \eqref{eq:xc(ti)}, a step input with an opposite sign \( h'_{i}(t) = - h_{i}(t) = -(A_\rho - I)x_c(t_{i}) \, h(t - t_{i} \pm \pi/\omega) \) is introduced at the time instant \( t_i + \pi/\omega \). Signals $h_i(t)$ and $- h_{i}(t)$ together produce a square wave signal over each steady-state cycle, beginning at \( t_i \) with an amplitude of \( (A_\rho - I)x_i(t_{i}) \) and a period of \( 2\pi/\omega \), as illustrated in Fig.\ref{square_wave_t1_t2_t_mu}.

\vspace{-0.3cm}
\begin{figure}[htp]
    \centering
    \centerline{\includegraphics[width=0.75\columnwidth]{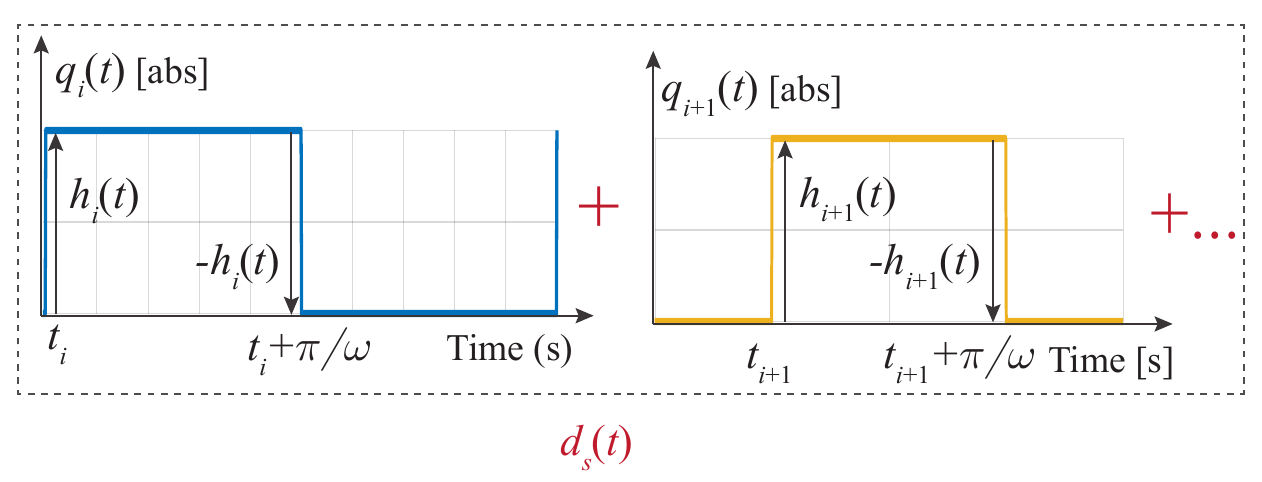}}
    \caption{Plots of signals $q_i(t)$ in \eqref{eq: qi_def} and $d_s(t)$ in \eqref{eq: ds(t)_def0}.}
	\label{square_wave_t1_t2_t_mu}
\end{figure}
\vspace{-0.3cm}

\textbf{Step 2: Formulate the Square Waves.}

The square wave introduced at the time instants $t_i$ and $t_i+\pi/\omega$ is expressed as 
\begin{equation}
\label{eq: qi_def}
 q_i(t) = (A_\rho-I)x(t_i)q(t-t_i),   
\end{equation}
where \( q(t) \) is a square wave with an amplitude of 1 and a period of \( 2\pi/\omega \), defined as:
\begin{equation}
\label{eq:q(t)}
q(t) = \sum\nolimits_{n=1}^{\infty} 2\cdot{\sin(n\omega t)}/{n\pi},\ n = 2k+1, k\in\mathbb{N}.
\end{equation}
From \eqref{eq: qi_def} and \eqref{eq:q(t)}, $q_i(t)$ is expressed as
\begin{equation}
\label{eq: qi_def2}
    \begin{aligned}
 q_i(t) &=  \sum\nolimits_{n=1}^{\infty} q_i^n(t),
    \end{aligned}
\end{equation}
where 
\begin{equation}
\label{eq: qi_def22}
q_i^n(t) =  {2(A_\rho-I)x(t_i)\sin(n\omega (t-t_i))}/({n\pi}).
\end{equation}

\textbf{Step 3: Illustrate that Square Waves $q_i(t)$ Combine to Form a Stair-Step Signal $d_s(t)$, Contributing to the Reset-Triggered Signal $z_s(t)$.}

At each reset instant \( t_i \) within the half-cycle \((0, \pi/\omega]\), a square wave \( q_i(t) \) is introduced. Let the number of reset instants within each half-cycle \((0, \pi/\omega]\) be denoted by \(\mu\). From \eqref{eq: qi_def2} and \eqref{eq: qi_def22}, a stair-step signal \( d_s(t) \) is generated within one \( 2\pi/\omega \) period. This signal is illustrated in Fig. \ref{square_wave_t1_t2_t_mu} and is expressed as:
\begin{equation}
\label{eq: ds(t)_def0}
    d_s(t) = \sum\nolimits_{i=1}^{i=\mu} q_i(t) = \sum\nolimits_{i=1}^{i=\mu}\sum\nolimits_{n=1}^{\infty} q_i^n(t).
\end{equation}
From \eqref{eq: ds(t)_def0}, $d_s(t)$ can be written as
\begin{equation}
\label{eq: ds(t)_def1}
    d_s(t) = \sum\nolimits_{n=1}^{\infty}\sum\nolimits_{i=1}^{i=\mu} q_i^n(t).
\end{equation}
Define $d_s^n(t)$ as the $n$th harmonic of $d_s(t)$, from \eqref{eq: qi_def22} and \eqref{eq: ds(t)_def1}, $d_s(t)$ is expressed as
\begin{equation}
\label{eq: ds(t)_def2}
    \begin{aligned}
     d_s(t) &= \sum\nolimits_{n=1}^{\infty} d_s^n(t),\\
     d_s^n(t) &=  {2(A_\rho-I)}/({n\pi})\cdot\sum\nolimits_{i=1}^{i=\mu} x(t_i)\sin(n\omega (t-t_i)),
    \end{aligned}
\end{equation}
with their Fourier transforms given by
\begin{equation}
\label{eq: Ds(w),Ds_n(w)}
   \begin{aligned}
      D_s(\omega) &=  \sum\nolimits_{n=1}^{\infty} D_s^n(\omega),\\
      D_s^n(\omega)  &= {2(A_\rho-I)}/({n\pi})\cdot\sum\nolimits_{i=1}^{i=\mu}\mathscr{F}[x(t_i)\sin(n\omega (t-t_i))].
   \end{aligned} 
\end{equation}

Under Assumption \ref{assum: t1}, the reset-triggered signal \( z_s(t) \) initially follows its base-linear trajectory \( z_{bl}(t) \) within the interval \((0, t_1)\), as defined in \eqref{eq:zs,znl,zbl}. At time \( t_1 \), reset actions introduce a stair-step signal \( d_s(t) \) into the system. By replacing the signal \( h_i(t) \) (whose Fourier transform is \( H_i(\omega) = 1/(j\omega) \)) with the stair-step signal \( d_s(t) \) (whose Fourier transform is \( D_s(\omega)\)) in Fig. \ref{fig1:xnl_to_x_tf}, and following the derivation process outlined in \ref{appendix: proof for Lemma 1}, the nonlinear component \( z_{nl}(t) \) is derived. Finally, \( z_{bl}(t) \) and \( z_{nl}(t) \) combine to form \( z_s(t) \), as expressed in \eqref{eq:zs,znl,zbl}. This concludes the proof.
\end{proof}

\vspace{-0.5cm}
\subsection{The proof of Theorem \ref{thm: beta_closed_loop_reset}}
\begin{proof}
\label{appendix: proof for Theorem 2}
Consider a closed-loop reset control system as illustrated in Fig. \ref{fig: RCS_d_n_r_n_n}, with a sinusoidal reference input \( r(t) = |R| \sin (\omega t) \), satisfying Assumptions \ref{assum: stable} and \ref{assum: t1}. This proof derives the magnitude ratio of the higher-order harmonics (for \( n > 1 \)) relative to the first-order harmonic (for \( n = 1 \)) in the nonlinear component \( z_{nl}(t) \) as defined in \eqref{eq:zs,znl,zbl}.

From \eqref{eq:zs,znl,zbl} and \eqref{eq: z_{nl}^n,z_nl}, 
the signal \( z_{nl}^n(t) \), representing the \( n \)th harmonic component of \( z_{nl}(t) \), is given by:
\begin{equation}
\label{eq:z_nl^n}
z_{nl}^n(t) = -\mathscr{F}^{-1}[\mathcal{C}_s(n\omega)\mathcal{T}_\beta(n\omega)D_s^n(\omega)].
\end{equation}
From \eqref{eq:z_nl^n}, the Fourier transform of $z_{nl}^n(t)$ is given by
\begin{equation}
\label{eq:z_nl^n(w)}
Z_{nl}^n(\omega) = -\mathcal{C}_s(n\omega)\mathcal{T}_\beta(n\omega)D_s^n(\omega).   
\end{equation}
From \eqref{eq: Ds(w),Ds_n(w)} and \eqref{eq:z_nl^n(w)}, we obtain:
\begin{equation}
\label{eq: beta_n_derive}
\begin{aligned}
\beta_n(\omega) &= \frac{|Z_{nl}^n(\omega)|}{|Z_{nl}^1(\omega)|}
= \frac{|\mathcal{C}_s(n\omega)\mathcal{T}_\beta(n\omega)|}{n|\mathcal{C}_s(\omega)\mathcal{T}_\beta(\omega)|}.
\end{aligned}
\end{equation}
Here, the proof is concluded.
\end{proof}

\subsection{The proof of Remark \ref{lem: angle cs_bw>0}}
\label{appendix: proof for lemma Cs_bw>0}
\begin{proof}
Given the matrices for the generalized CI are defined as \((A_R, B_R, C_R, D_R) = (0, 1, 1, 0)\) and \(A_\rho = \gamma \in (-1, 1)\). From \eqref{eq: Ln}, the SIDF $\mathcal{C}_r^1(\omega)$ for the generalized CI is given by
\begin{equation}
	\label{eq: Hn_CI} 
    \mathcal{C}_r^1(\omega) = (j\omega)^{-1}(\Theta_{CI}(\omega)+1),
\end{equation}
where
\begin{equation}
\label{eq: Hn_CI2}
    \Theta_{CI}(\omega) = \frac{4j(1-\gamma)e^{j\angle \mathcal{C}_s(\omega)}(\cos(\angle \mathcal{C}_s(\omega)))}{\pi(1+\gamma)}.
\end{equation}
From \eqref{eq: Hn_CI} and \eqref{eq: Hn_CI2}, under the condition \(\gamma \in (-1,1)\), \(\angle \mathcal{C}_s(\omega)\) and \(\angle \mathcal{C}_1(\omega)\) are positively correlated. This means that the PI shaping filter in \eqref{eq: cs_pi} with a negative phase, will introduce a phase lag to the first-order harmonic \(\angle \mathcal{C}_1(\omega)\). 

Conversely, a positive phase \(\angle \mathcal{C}_s(\omega)\) introduces a phase lead to \(\angle \mathcal{C}_1(\omega)\), counteracting the phase lag induced by the PI. To achieve this, a derivative element \(\frac{s/\omega_\beta+1}{s/\omega_\eta +1}\), which satisfies \(\angle \mathcal{C}_s(\omega_{BW}) > 0\), is incorporated into the PI shaping filter \(\mathcal{C}_s\) in \eqref{eq: cs_pi} to provide the desired phase lead. However, because the derivative element amplifies high-frequency components, a low-pass filter, \(\frac{1}{s/\omega_\psi+1}\), is added to suppress these components and prevent \(z_s(t)\) from becoming overly sensitive to high-frequency noise.
Here, the proof is concluded.
\end{proof}

\end{document}